\newenvironment{proofofthm}[1] 
  {\vspace{0.2cm}
  \par
  \noindent 
  {\em Proof of Theorem #1.}\mbox{}} 
  {\hfill$\square$} 
  {\vspace{0.2cm}
  \par
  \noindent 
  {\em Proof of Lemma #1.}\mbox{}} 
  {\hfill$\square$} 
\newtheorem{theorem}{Theorem}[section]
\newtheorem{lemma}{Lemma}[section]
\newtheorem{observation}[theorem]{Observation}
\newtheorem{conjecture}{Conjecture}[section]
\newtheorem{corollary}[theorem]{Corollary}
\newtheorem{property}[theorem]{Property}
\theoremstyle{definition}
\newtheorem{question}{Question}
\crefname{conjecture}{Conjecture}{Conjectures}
\crefname{question}{Question}{Questions}
\newcommand{\ignore}[1]{}
\newcommand{\E}{{\mathbb E\/}}
\newcommand{\polylog}{\operatorname{polylog}}
\newcommand{\Err}{\operatorname{Err}}
\newcommand{\boldu}{\textbf{u}}
\newcommand{\Grid}{\mathsf{Grid}}
\newcommand{\dist}{\operatorname{dist}}
\newcommand{\monodist}{\operatorname{monodist}}
\newcommand{\mono}{\operatorname{mono}}
\newcommand{\Z}{\mathbb{Z}}
\newcommand{\R}{\mathbb{R}}
\newcommand{\Ball}{\operatorname{Ball}}
\newcommand{\Angles}{\mathrm{Angles}}
\newcommand{\Highway}{\operatorname{Highway}}
\newcommand{\Lines}{\mathsf{Lines}}
\newcommand{\Segments}{\mathsf{Segments}}
\newcommand{\Fat}{\operatorname{Fat}}
\title{The Squishy Grid Problem\thanks{Supported by NSF Grant CCF-2221980.  This research was conducted while the first three authors were visiting University of Michigan.}}
\author{Zixi Cai\\
IIIS, Tsinghua University
\and
Kuowen Chen\\
IIIS, Tsinghua University
\and 
Shengquan Du\\
IIIS, Tsinghua University
\and
Arnold Filtser\\
Bar-Ilan University
\and
Seth Pettie\\
University of Michigan
\and
Daniel Skora\\
University of Michigan}
\date{}
\begin{document}
\maketitle

\begin{abstract}
In this paper we consider the problem of approximating Euclidean distances 
by the infinite integer grid graph.  Although the topology of the graph is fixed,
we have control over the edge-weight assignment $w : E\to \R_{\geq 0}$,
and hope to have grid distances be asymptotically isometric to Euclidean distances, that is:
\[
\text{For all grid points $u,v$, $\dist_w(u,v) = (1\pm o(1))\|u-v\|_2$.}
\]
We give three methods for solving this problem, 
each attractive in its own way.
\begin{itemize}
    \item Our first construction is based on an embedding of the recursive, 
    non-periodic \emph{pinwheel tiling} of Radin and Conway~\cite{Radin94,RadinS96,ConwayR98} into the integer grid.  Distances in the pinwheel graph are asymptotically isometric to Euclidean distances, but no explicit bound on the rate of convergence was known.  We prove that the multiplicative distortion of the pinwheel graph is $(1 + 1/\Theta(\log^\xi \log D))$, where $D$ is the Euclidean distance and $\xi=\Theta(1)$.  The pinwheel tiling approach is conceptually simple, but can be improved quantitatively.

    \item Our second construction is based on a hierarchical arrangement of \emph{highways}.  It is simple, achieving stretch
    $(1 + 1/\Theta(D^{1/9}))$, which converges doubly exponentially faster 
    than the pinwheel tiling approach.

    \item The first two methods are deterministic, with rigorous guarantees.  An even simpler approach is to sample the edge weights independently and randomly from a common distribution $\mathscr{D}$.  Whether there exists a distribution $\mathscr{D}^*$ that makes grid distances Euclidean, asymptotically and in expectation, is major open problem in the theory of \emph{first passage percolation}.  Previous experiments show that when $\mathscr{D}$ is a Fisher distribution (which is continuous), grid distances are within $1\%$ of Euclidean distances.  We demonstrate experimentally that this level of accuracy can be achieved by a simple 2-point distribution that assigns weights $0.41$ or $4.75$ with probability $44$\% and $56$\%, respectively.
\end{itemize}
\end{abstract}

\section{Introduction}

In this paper we consider a natural geometric problem tangentially related to metric 
embeddings, spanners, and, in its randomized form, percolation theory.  Suppose we wish to approximate Euclidean distances between points on the plane, but with a simple \emph{discrete} structure: the integer grid graph $\Grid = (\Z\times \Z, \{\{u,v\} \mid \|u-v\|_1=1\})$.
If we consider all edges of $E(\Grid)$ to have unit length, then $\Grid$ can be regarded as a $\sqrt{2}$-spanner since for any $(u,v) \in (\Z^2)^2$, 
\[
\|u-v\|_2 \leq \dist_{\Grid}(u,v) \leq \sqrt{2}\cdot \|u-v\|_2.
\]
Now define $\Grid[w]$ to be $\Grid$ endowed with a non-negative edge-weight assignment $w : E(\Grid)\to\R_{\geq 0}$, and let $\dist_w$ be the distance function with respect to $w$.  
We consider the natural question: does there exist a $\Grid[w^*]$ that is an asymptotic 1-spanner of the Euclidean plane?

\begin{question}[The Squishy Grid Problem]\label{q:squishy-grid}
Does there exist a weight function $w^*$ such that for all $u,v\in V(\Grid)$
$\dist_{w^*}$ is 
asymptotically Euclidean?  
That is,
\[
\dist_{w^*}(u,v) = (1\pm o(1))\|u-v\|_2.
\]
If so, we may distinguish various types of convergence:
\begin{description}
    \item[\emph{Polynomial}.] $\dist_{w^*}(u,v) = \|u-v\|_2 \pm O(\|u-v\|_2)^{1-\Omega(1)}$.
    \item[\emph{Subpolynomial}.] $\dist_{w^*}(u,v) = \|u-v\|_2 \pm (\|u-v\|_2)^{o(1)}$.
    \item[\emph{Constant}.] $\dist_{w^*}(u,v) = \|u-v\|_2 \pm O(1)$.
\end{description}
\end{question}

Before discussing our approach to answering \cref{q:squishy-grid} we review the history of \cref{q:squishy-grid} and its connections to percolation theory.

\subsection{History of the Problem and Related Results}

G.~Tardos (personal communication) made us aware of a 1990 
book chapter of Pach, Pollack, and Spencer~\cite{PachPS90} 
who attributed some version of \cref{q:squishy-grid} to Paul Erd\H{o}s.  
Pach et al.~\cite{PachPS90} proved that for any fixed $\epsilon>0$
there exists a graph $G[\epsilon]$ on the vertex set $\Z\times \Z$
such that for all $u,v$,
\[
\|u-v\|_2 \leq \dist_{G[\epsilon]}(u,v) \leq (1+\epsilon)\|u-v\|_2 + O(5^{1/\epsilon}).
\]
Unfortunately, $G[\epsilon]$ is not planar and obviously depends on $\epsilon$, so it does not lead to a resolution of  \cref{q:squishy-grid}.  
Burago and Ivanov~\cite{BuragoI15} exhibited a regular, weighted graph $G$ on the vertex set $\Z\times \Z$ for which
\[
\|u-v\|_2 \leq \dist_G(u,v) \leq \|u-v\|_2 + C,
\]
for some absolute constant $C$.  However, $G$ is also not planar.

\medskip 

Borradaile and Eppstein~\cite{BorradaileE15} 
considered a more general problem: 
given a point set $P\in \R^2$, compute a weighted 
planar graph $G=(P\cup S,E)$ with \emph{Steiner points} 
$S$ such that $\dist_G(u,v)$ $(1+\epsilon)$-approximates the Euclidean distance $\|u-v\|_2$.  They proved that $|S|=O_{\epsilon,\alpha}(|P|)$ suffices, 
where $\alpha$ is the sharpest angle in the Delaunay triangulation of $P$.  A result of Chang, Krauthgamer, and Tan~\cite{ChangKT22} implies an upper bound of $O_\epsilon(|P|\polylog|P|)$, 
which is slightly superlinear but independent of $\alpha$.

\medskip 

The problem was first posed to us by G.~Bodwin, not as a deterministic design problem (\cref{q:squishy-grid}) but as a \emph{randomized} one. Whenever $\mathscr{D}$ is a distribution over $\R_{\geq 0}$, let $\Grid[\mathscr{D}]$ be the distribution of weighted graphs such that for each $e\in E(\Grid)$, $w(e)\sim \mathscr{D}$ is sampled independently from the distribution.
Is it possible to find a distribution $\mathscr{D}^*$ such that distances in $\Grid[\mathscr{D}^*]$ are Euclidean in expectation?  In more detail:

\begin{question}[Randomized Squishy Grid Problem]\label{q:randomized-squishy-grid}
Does there exist a distribution $\mathscr{D}^*$ over $\R_{\geq 0}$
such that if $\Grid[w] \sim \Grid[\mathscr{D}^*]$ is a randomly weighted graph, for all $u,v\in V(\Grid)$,
\[
\E(\dist_w(u,v)) = (1\pm o(1))\|u-v\|_2.
\]
\end{question}

The randomized process implicit in \cref{q:randomized-squishy-grid} is
actually not new, but dates back to at least a 1965 paper of 
Hammersley and Welsh~\cite{HammersleyW65}, who called it \emph{first passage percolation}. 
They imagined an \emph{orchard} in which trees were planted on the 
integer lattice.  One tree is initially infected, and the time taken for an infected tree to infect a cardinal neighbor is governed by a distribution $\mathscr{D}$ on $\R_{\geq 0}$.  
One can then ask: how far does the infection spread by time $t$? and 
what does the set of infected trees look like?

Many basic questions in first passage percolation theory remain open, 
and we can quickly summarize the known facts related to \cref{q:randomized-squishy-grid}.
Let $\mathbf{0} = (0,0)$ be the origin
and $e_\theta$ be the unit vector with angle $\theta$ degrees.
We interpret $ne_\theta$ to mean the integer point in $V(\Grid)$ nearest to $ne_\theta$.
The \emph{time constant} $\mu_0(\mathscr{D})$ 
is
such that $\lim_{n\to \infty} \dist_w(\mathbf{0}, ne_0)/n = \mu_0$ almost surely,
which exists if, 
whenever $w_1,\ldots,w_4\sim \mathscr{D}$ are independently sampled, 
$\E(\min\{w_1,w_2,w_3,w_4\}) < \infty$~\cite{Kesten86}.
It follows that $0\leq \mu_0 \leq \E(w_1\sim \mathscr{D})$, with the latter inequality
holding with equality only if $w_1\sim\mathscr{D}$ is constant almost surely~\cite{HammersleyW65}.
Similarly, the time constants for other angles 
$\mu_\theta(\mathscr{D}) = \lim_{n\to \infty} \dist_w(\mathbf{0},ne_\theta)/n$ exist, 
and collectively define the limiting \emph{shape} of the 
balls under distribution $\mathscr{D}$.
Let $B(t) = \{u\in \Z^2 \mid \dist_w(\mathbf{0},u)\leq t\}$ be the 
ball of radius $t$ around the origin.  The Cox-Durrett shape theorem~\cite{CoxD81} shows that with probability 1, 
as $t\to \infty$, $B(t)/t$ tends to a fixed \emph{limit shape} $\mathcal{B}(\mathscr{D}) \subset \R^2$.
When $\mu_0(\mathscr{D})>0$,
$\mathcal{B}(\mathscr{D})$ 
is bounded, convex, and has the same symmetries as $\Z^2$,
and when $\mu_0(\mathscr{D})=0$, $\mathcal{B}(\mathscr{D})$ is $\R^2$ itself.  See~\cite{AuffingerDH17} for an extensive survey of first passage percolation theory.

In the context of answering \cref{q:randomized-squishy-grid} we can rescale
any non-trivial distribution $\mathscr{D}$ so that its time constant $\mu_0(\mathscr{D})=1$,
i.e., distances from the origin to points on the $x$- and $y$-axes are asymptotically 
isometric.  In light of the Cox-Durrett theorem, \cref{q:randomized-squishy-grid}
asks whether there exists a $\mathscr{D}$ for which $\mathcal{B}(\mathscr{D})$
is the unit $L_2$ ball $\{x \mid \|x\|_2\leq 1\}$.

Unfortunately, there are no results characterizing $\mathcal{B}(\mathscr{D})$
for \emph{any} non-trivial distribution $\mathscr{D}$.  It is not even known
whether there exists $\mathscr{D}$ such that 
\begin{equation}\label{eqn:0-45}
\lim_{n\to \infty} 
\frac{\E(\dist_w(\mathbf{0}, ne_0))}{n} = \lim_{n\to \infty} \frac{\E(\dist_w(\mathbf{0}, ne_{45}))}{n}=1,
\end{equation}
i.e., $\mathcal{B}(\mathscr{D})$ coincides with the unit $L_2$-ball 
on the eight (inter)cardinal directions.  On the other hand, we have solid
experimental evidence that $\mathcal{B}(\mathscr{D})$ can get within 1\% of the 
unit $L_2$-ball, for certain distributions $\mathscr{D}$.
An experimental study of Alm and Deijfen~\cite{AlmD15} 
looked at various continuous distributions $\mathscr{D}$.  When $\mathscr{D}$ is the uniform distribution, the limit shape 
$\mathcal{B}(\mathscr{D})$ approximates the 
$L_2$-ball to with 4\%, whereas when $\mathscr{D}$ is exponential the limit shape is about 1.5\% away from the $L_2$-ball.  The best empirical approximation to the $L_2$-ball came from a 
Fisher distribution, with error less than 1\%.

\subsection{Results and Findings}

We provide two approaches to answering \cref{q:squishy-grid}.  
and present additional experimental evidence that \cref{q:randomized-squishy-grid} can be answered in the affirmative, using simple discrete distributions.

\medskip 

Our first construction is based on Radin and Conway's \emph{pinwheel tiling}~\cite{Radin94,RadinS96,ConwayR98}, 
a conceptually simple tiling that emerges from the observation that a right triangle with proportions $1 : 2 : \sqrt{5}$ can be partitioned into five right triangles with the same proportions.
It is known~\cite{RadinS96} that when regarded as a plane graph $G_{\operatorname{PW}}$
with edges weighted according to Euclidean distance, 
distances in the pinwheel tiling are Euclidean \emph{in the limit},
that is,
\[
\lim_{d\to \infty} \; \max_{u,v : \|u-v\|>d}\; \frac{\dist_{G_{\operatorname{PW}}}(u,v)}{\|u-v\|_2} = 1.
\]
However the rate of convergence is unknown.  We embed the pinwheel tiling into the grid graph, and prove a bound on its convergence, namely that for a constant $\xi=\Theta(1)$,
\[
\dist_{G_{\operatorname{PW}}}(u,v) = \left(1 + O\left(\frac{1}{\log^{\xi}\log \|u-v\|_2}\right)\right)\|u-v\|_2.
\]

A natural problem is to optimize the \emph{convergence} rate of the construction.  We give a new, simple construction of a weight function $w$ of the grid that is asymptotically Euclidean, with a \textbf{\emph{polynomial}} convergence rate.
\[
\dist_w(u,v) = \|u-v\|_2 + O(\|u-v\|_2^{8/9}).
\]

The construction is based on laying out ``highways'' in the plane, which are paths cleaving closely to a line with a certain slope $a$, whose edge weights are equal and chosen to approximate Euclidean distances along the highway. For example, when $a\in [0,1]$, the weights are $\frac{\sqrt{a^2+1}}{a+1}$.
In order to get a $(1+o(1))$-distance approximation, it is necessary that the set of slopes of all highways be dense in $[0,\pi)$.  Thus, there are infinitely many slopes, and infinitely many parallel highways of each slope, whose intersection pattern is quite complicated. 
The tricky part in the design stage is to decide what to do with intersecting highways.
We give a simple method that eliminates intersections while guaranteeing polynomial convergence.

\medskip 

Alm and Deijfen's~\cite{AlmD15} experimental study of first passage percolation selected $\mathscr{D}$ from 
various continuous distributions such as uniform, exponential, Gamma, and Fisher distributions.
For several of these distributions the limit shape $\mathcal{B}(\mathscr{D})$ approximated the $L_2$-ball within a few percent, with a Fisher distribution being the best.  Our experiments show that very simple distributions with support size 2 or 3 can replicate the accuracy of the continuous distributions~\cite{AlmD15}.  
For example, the improbable distribution $\mathscr{D}_2$:
\[
\Pr_{w_0 \sim \mathscr{D}_2}\left(w_0  = \left\{\begin{array}{l}
0.41401\\
4.75309
\end{array}\right.\right)
=
\left\{\begin{array}{l}
0.44273\\
0.55727
\end{array}\right.
\]
empirically approximates the Euclidean 
metric to within about $0.75\%$,
and a certain 3-point distribution $\mathscr{D}_3$ approximates it to within $0.622\%$.
A very natural question is whether other 
$L_p$ metrics can be approximated, in 
expectation, by various distributions.  
We illustrate that the uniform distribution 
and some Gamma distributions approximate 
$L_p$ metrics with $p<2$.  
None of our experiments support the 
possibility that $L_p$ metrics with $p>2$ 
can be approximated.

\subsection{Organization}

We present the construction based on pinwheel tilings in \cref{sect:pinwheel},
as well as new bounds on the convergence of the stretch of the pinwheel graph.
The highway construction is presented in \cref{sect:highway-construction}, 
having polynomial convergence.
We present the experimental findings in \cref{sect:experimental-findings}
and conclude with a discussion of several open problems related to 
\cref{q:squishy-grid,q:randomized-squishy-grid} in \cref{sect:conclusion}.

\medskip

Sections \ref{sect:pinwheel}, \ref{sect:highway-construction}, and \ref{sect:experimental-findings} are written to be entirely independent.  
One may read them in any order.

\section{A Deterministic Construction Based on Pinwheel Tilings}\label{sect:pinwheel}

The ``pinwheel'' tiling of Radin~\cite{Radin94} 
is an example of a non-periodic tiling using a single 
tile type (and its reflection).  
Let $\mathscr{T}_0,\mathscr{T}_1,\mathscr{T}_2,\ldots$ 
be a series of tilings of ever larger triangular 
swatches of the plane, and let $\mathscr{T}_\omega$ be the tesselation 
of the plane achieved in the limit.  $\mathscr{T}_0$ 
consists of a single right triangle with side lengths $1,2,\sqrt{5}$.  In general $\mathscr{T}_{i+1}$
is formed from $\mathscr{T}_i$ by taking four
additional copies of $\mathscr{T}_{i}$, suitably
reflected, rotated, and translated, 
so that they form a larger triangle with the same 
$1:2:\sqrt{5}$ proportions.
\cref{fig:pinwheel-tiling} illustrates the construction
of $\mathscr{T}_2$ from $\mathscr{T}_1$ and $\mathscr{T}_0$.

\subsection{Pinwheel Tilings}

By construction $\mathscr{T}_\omega$ is a tiling of the 
plane using \emph{atomic} triangles with side lengths $1,2,\sqrt{5}$.  Due to the recursive nature of the construction, we can also regard $\mathscr{T}_\omega$ as a tiling using $\sqrt{5}^i,2\sqrt{5}^i,\sqrt{5}^{i+1}$ triangles, for any integer $i\geq 0$.  Observe from \cref{fig:pinwheel-tiling}
that the boundary of $\mathscr{T}_{i+1}$ is obtained from the boundary $\mathscr{T}_i$ by scaling by $\sqrt{5}$, translation, and rotation by $\arctan(1/2)$.  We will henceforth define $\gamma=\arctan(1/2)$.  As $\gamma/(2\pi)$ is irrational, the orientation of tiles in $\mathscr{T}_\omega$ is uniformly distributed in $[0,2\pi)$.  Radin and Sadun~\cite{RadinS96} used this fact to prove an isoperimetric property of 
pinwheel tilings, namely that there are finite subsets
of tiles from $\mathscr{T}_\omega$ whose area/perimeter$^2$
is arbitrarily close to that of the circle.
Suppose we regard $\mathscr{T}_\omega$ as a plane graph $G_\omega$, whose vertices and edges are the union of the vertices and edges of all atomic triangles.  Radin and Sadun~\cite{RadinS96} proved that for $u,v\in V(G_{\omega})$, 
$\dist_{G_\omega}(u,v) = (1+o(1))\|u-v\|_2$.  Although the multiplicative stretch is 1 \emph{in the limit}, their proof implies no particular rate of convergence.  We prove the following.

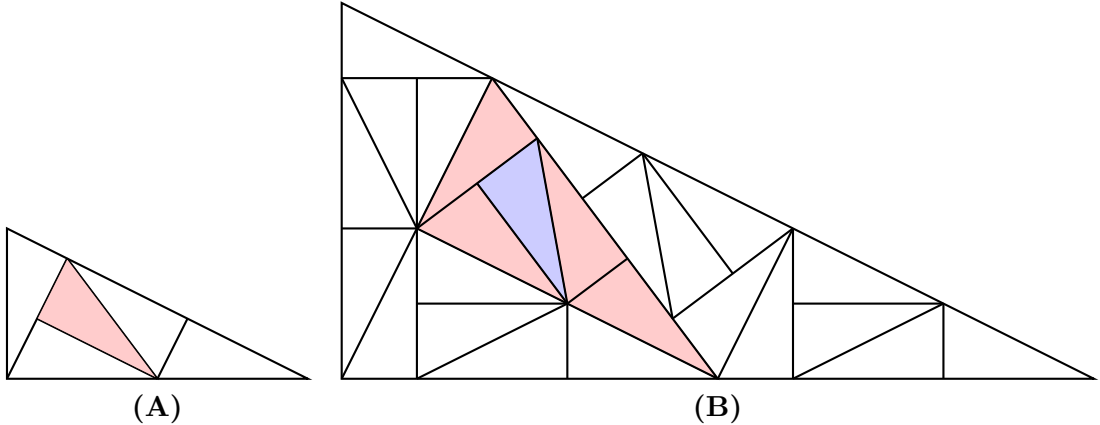
\begin{figure}
\begin{tabular}{cc}
\begin{tikzpicture}[scale=2]

\coordinate (A) at (0, 0);
\coordinate (B) at (2, 0);
\coordinate (C) at (0, 1);

\draw[thick] (A) -- (B) -- (C) -- cycle;


\coordinate (D) at ($(A)!1/2!(B)$);
\coordinate (E) at ($(C)!3/5!(B)$);
\coordinate (F) at ($(C)!1/5!(B)$);
\coordinate (G) at ($(A)!1/2!(F)$);

\draw[thick] (A) -- (F);
\draw[thick] (D) -- (E);
\draw[thick] (F) -- (D);
\draw[thick] (G) -- (D);
\draw[black, fill=red!20] (D) -- (F) -- (G) -- cycle;

\end{tikzpicture}

&

\begin{tikzpicture}[scale=5]

\coordinate (A) at (0, 0);
\coordinate (B) at (2, 0);
\coordinate (C) at (0, 1);

\draw[thick] (A) -- (B) -- (C) -- cycle;


\coordinate (D) at ($(A)!1/2!(B)$);
\coordinate (E) at ($(C)!3/5!(B)$);
\coordinate (F) at ($(C)!1/5!(B)$);
\coordinate (G) at ($(A)!1/2!(F)$);
\coordinate (H) at ($(D)!3/5!(B)$);
\coordinate (J) at ($(D)!1/5!(B)$);
\coordinate (I) at ($(E)!1/2!(B)$);
\coordinate (K) at ($(E)!1/2!(J)$);
\coordinate (L) at ($(D)!1/5!(F)$);
\coordinate (M) at ($(D)!3/5!(F)$);
\coordinate (N) at ($(E)!1/2!(F)$);
\coordinate (O) at ($(E)!1/2!(L)$);
\coordinate (P) at ($(A)!3/5!(D)$);
\coordinate (Q) at ($(A)!1/5!(D)$);
\coordinate (R) at ($(G)!1/2!(Q)$);
\coordinate (S) at ($(G)!1/2!(D)$);
\coordinate (T) at ($(F)!1/5!(D)$);
\coordinate (U) at ($(G)!1/2!(T)$);
\coordinate (V) at ($(F)!3/5!(D)$);
\coordinate (W) at ($(C)!1/5!(A)$);
\coordinate (X) at ($(C)!3/5!(A)$);
\coordinate (Y) at ($(W)!1/2!(F)$);

\draw[black, fill=red!20] (D) -- (F) -- (G) -- cycle;
\draw[black, fill=blue!20] (S) -- (U) -- (T) -- cycle;

\draw[thick] (A) -- (F);
\draw[thick] (D) -- (E);
\draw[thick] (F) -- (D);
\draw[thick] (G) -- (D);

\draw[thick] (J) -- (I);
\draw[thick] (J) -- (E);
\draw[thick] (H) -- (I);
\draw[thick] (K) -- (I);

\draw[thick] (E) -- (L);
\draw[thick] (M) -- (N);
\draw[thick] (L) -- (N);
\draw[thick] (O) -- (N);

\draw[thick] (G) -- (T);
\draw[thick] (S) -- (U);
\draw[thick] (S) -- (T);
\draw[thick] (S) -- (V);

\draw[thick] (Q) -- (G);
\draw[thick] (S) -- (R);
\draw[thick] (S) -- (Q);
\draw[thick] (S) -- (P);

\draw[thick] (F) -- (W);
\draw[thick] (G) -- (Y);
\draw[thick] (G) -- (W);
\draw[thick] (G) -- (X);

\end{tikzpicture}\\
\textbf{(A)} & \textbf{(B)}
\end{tabular}
\caption{\label{fig:pinwheel-tiling}
\textbf{(A)} $\mathscr{T}_1$, containing $\mathscr{T}_0$ in red.
\textbf{(B)} $\mathscr{T}_2$, containing $\mathscr{T}_1$ in red, and $\mathscr{T}_0$ in blue.}

\end{figure}

\begin{theorem}\label{thm:pinwheel-stretch-1}
    Let $G_\omega$ be the plane graph of the pinwheel tiling $\mathscr{T}_\omega$, whose edges are weighted according to the 
    Euclidean distance between their endpoints.  Then for any $u,v\in V(G_{\omega})$,
    \[
    \|u-v\|_2 \leq \dist_{G_\omega}(u,v) \leq  (1+O(1/(\log\log\|u-v\|_2)^\xi))\cdot \|u-v\|_2,
    \]
    for some $\xi > 0$.
\end{theorem}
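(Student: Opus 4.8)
The plan is to establish the trivial lower bound, reduce the global estimate to a single-scale \emph{crossing lemma}, and prove the crossing lemma by a multi-scale routing argument whose convergence rate is governed by an effective equidistribution bound. The lower bound is immediate: every edge of $G_\omega$ carries a weight equal to the Euclidean length of the segment it spans, so any $u$--$v$ path has weight at least $\|u-v\|_2$ by the triangle inequality. For the upper bound, given $u,v$ with $D=\|u-v\|_2$, let $L$ be least with $\sqrt{5}^{\,L}\ge D$. The segment $\overline{uv}$ then meets only $O(1)$ tiles $\Delta_1,\dots,\Delta_k$ of the level-$L$ tiling $\mathscr{T}_L$, passing from $\Delta_j$ into $\Delta_{j+1}$ at a point $p_j\in\overline{uv}$; rounding each $p_j$ to a nearby vertex of $G_\omega$ costs $O(1)$ (vertices are $O(1)$-dense) and keeps $\sum_j\|p_{j+1}-p_j\|_2\le D+O(1)$. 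Hence it suffices to prove a crossing lemma: for a level-$L$ triangle $\Delta$ and vertices $p,q$ on (or within $O(1)$ of) $\partial\Delta$, $\dist_{G_\omega}(p,q)\le(1+\varepsilon(L))\,\|p-q\|_2+o(D)$ with $\varepsilon(L)=O\big(1/(\log L)^{\xi}\big)$; since $L=\Theta(\log D)$, the theorem follows.

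\emph{Routing along near-parallel long edges, and the two structural inputs.} Fix the direction $\theta$ of $q-p$. The idea is to build a $p$--$q$ path that, away from its endpoints, rides along long straight edges of $G_\omega$ nearly parallel to $\theta$. Two facts make this possible. First, by self-similarity $G_\omega$ contains straight edges of length $\Theta(\sqrt{5}^{\,i})$ for every integer $i$ --- the legs of level-$i$ triangles and the collinear leg-runs forming level-$i$ hypotenuses. Second, the orientation of a level-$i$ triangle nested $d$ levels below a level-$(i+d)$ ancestor is a sum of $d$ rotations from a fixed finite set whose nontrivial angles are integer combinations of $\gamma=\arctan(1/2)$ and $\pi/2$; thus the directions realized by length-$\Theta(\sqrt{5}^{\,i})$ edges inside a level-$(i+d)$ triangle contain $\{\,j\gamma\bmod 2\pi:|j|\le cd\,\}$ up to right-angle rotations. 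Because $\gamma/\pi$ is irrational these equidistribute, and Baker's theorem on linear forms in logarithms makes this \emph{polynomially effective}: $\gamma/\pi$ has finite irrationality measure, so $\{\,j\gamma:|j|\le N\,\}$ is $O(N^{-1/\kappa})$-dense in the circle for an effective $\kappa>0$. Consequently, if nesting depth $d$ can be afforded when choosing long edges, their directions can be kept within slack $\delta\asymp d^{-1/\kappa}$ of $\theta$; riding such a chain across $\Delta$ --- choosing at each junction the continuation edge that best cancels the accumulated lateral drift --- gives total length $(1+O(\delta))\,\|p-q\|_2+o(D)$, the on-/off-ramp pieces near $p,q$ and the junction detours being confined to regions of diameter $O(\sqrt{5}^{\,m})=o(D)$ and so contributing only lower-order terms when handled recursively at smaller scales.

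\emph{Why the rate is doubly logarithmic.} The catch is how much nesting depth can be afforded. The recursion cannot simply reduce depth by $1$ per correction stage --- that would give only $\varepsilon(L)=O(L^{-1/\kappa})$, i.e.\ single-logarithmic convergence --- because what carries the bulk of the route is a near-parallel \emph{chain} of long edges, not isolated long edges, and marshalling such a chain together with its recursive corrections consumes length-scale fast enough that only $\Theta(\log L)$ useful correction scales remain before they shrink to $O(1)$. Thus the best attainable slack is $\delta=\Theta\big((\log L)^{-1/\kappa}\big)$, yielding $\varepsilon(L)=O\big((\log L)^{-\xi}\big)$ with $\xi=\Theta(1/\kappa)$; since $L=\Theta(\log D)$ this is exactly the claimed $(1+O(1/(\log\log\|u-v\|_2)^{\xi}))$ bound. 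The precise $\xi$ depends on the effective irrationality measure of $\gamma/\pi$ and on the tiling's combinatorics, and we do not optimize it.

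\emph{Main obstacle.} The heart of the argument --- and where I expect the real difficulty --- is the structural claim: that inside a level-$L$ triangle one can, for any prescribed $\theta$, embed a head-to-tail \emph{corridor} of long edges spanning $\Delta$, with angular control degrading like $d^{-1/\kappa}$ in the nesting depth $d$ it uses, and with the recursive cleanup at each correction scale confined tightly enough (in the exponent) that only $\Theta(\log L)$ such scales are needed. Making this precise amounts to understanding how long nearly-parallel corridors sit inside the pinwheel tiling. Granting it, the rest is routine: the reduction above, the Baker-type estimate, the drift-cancellation bookkeeping (one must verify that the corridor offers continuation edges on \emph{both} sides of $\theta$ often enough to keep lateral displacement $O(\delta\cdot(\text{distance travelled}))$ rather than growing linearly), and the check that all leftover terms are genuinely $o(D)$.
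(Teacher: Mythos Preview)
Your ingredients are right --- the finite irrationality measure of $\gamma/\pi$ via Baker, the equidistribution of subtriangle orientations, the multi-scale routing --- but the proof has a real gap, and it is exactly the one you flag as the ``main obstacle'': you never prove the corridor lemma, and your rate analysis (``only $\Theta(\log L)$ useful correction scales remain'') is asserted rather than derived. Building a head-to-tail chain of many near-parallel long edges, with drift cancellation and recursive cleanup at the junctions, is genuinely hard to make rigorous in the pinwheel tiling, and nothing you write explains why the cleanup cost telescopes to the claimed bound rather than, say, blowing up geometrically over the $5^{d/2}$ junctions you would need.

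The paper sidesteps the corridor construction entirely with a much simpler recursive device. Define $f(d)$ to be the worst stretch over all pairs at Euclidean distance $\ge d$. Given $P,Q$ with $|PQ|\ge 3d$ and $f(d)=1+\varepsilon$, locate a \emph{single} subtriangle $T'$ whose hypotenuse $\overline{pq}$ lies near the middle third of $\overline{PQ}$ and makes angle $\le\theta$ with it; then route $P\to p\to q\to Q$. The middle leg $\overline{pq}$ is one edge of $G_\omega$ (stretch $\le 1/\cos\theta$), while $|Pp|,|qQ|\ge d$ so both outer legs inherit stretch $\le 1+\varepsilon$ from the definition of $f$. Optimizing $\theta=\sqrt{\varepsilon}$ and the size of $T'$ yields a recurrence $f(3d)\le 1+\varepsilon-\Omega\big(5^{-\Theta(\varepsilon^{(1-\mu)/2})}\big)$; iterating this (each halving of $\varepsilon$ costs an exponential-in-$\varepsilon^{-\Theta(1)}$ factor in distance) gives $\varepsilon=O((\log\log D)^{-\xi})$ directly. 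The point is that you never need a long chain of aligned edges --- one well-oriented edge plus the inductive hypothesis on the two remaining legs suffices, and the doubly-logarithmic rate falls out of the recurrence rather than from a separate scale-counting argument.
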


\subsection{Distribution of Tile Orientations}

If $T$ is a triangle in the recursive tiling with dimensions
$\sqrt{5}^i,2\sqrt{5}^i,\sqrt{5}^{i+1}$, we call $T$ a \emph{level-$i$} triangle.

As a first step toward proving \cref{thm:pinwheel-stretch-1}, we analyze the orientations of the triangles contained within a single large triangle.  Given a triangle $T$ of level $x$, let $\Angles(T,k)$ denote the set of angles attained by the hypotenuses of all level-$(x-k)$ triangles contained within $T$.  We observe that the elements of this set are characterized by an arithmetic recurrence.

\begin{observation}\label{obs:rotations}
    Suppose a triangle $T$ of level $x$ has its hypotenuse at angle $\theta$.  For $0\leq k\leq x$, $\Angles(T,k) \supseteq \left\{\theta + (2t-k)\gamma \,|\, t \in\{0,\ldots,k\}\right\}$.
\end{observation}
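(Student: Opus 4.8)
The plan is an induction on $k$ in which all the geometry is concentrated in a single \emph{substitution lemma}, after which the inductive step is pure arithmetic on arithmetic progressions.

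First I would establish the substitution lemma: whenever a placed $1{:}2{:}\sqrt5$ triangle $S$ has its hypotenuse along a line of angle $\phi$ (taken mod $\pi$), the pinwheel subdivision of $S$ into five sub-triangles contains at least one sub-triangle of hypotenuse angle $\phi+\gamma$ and at least one of hypotenuse angle $\phi-\gamma$. What makes this cheap to prove is that the assertion is invariant under all similarities of the plane: translating or rescaling $S$ changes none of the relevant angles; rotating $S$ by $\alpha$ adds $\alpha$ to $\phi$ and to each sub-triangle's hypotenuse angle, so the required pair $\phi\pm\gamma$ is merely shifted along with everything else; and reflecting $S$ negates every angle, turning $\{\phi+\gamma,\phi-\gamma\}$ into $\{\phi-\gamma,\phi+\gamma\}$, the same unordered pair. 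Hence it suffices to verify the lemma for one convenient placement, for which I would take that of \cref{fig:pinwheel-tiling}(A), with $A=(0,0)$, $B=(2,0)$, $C=(0,1)$, so that the hypotenuse $BC$ has slope $-1/2$, i.e.\ angle $-\gamma$. A direct computation with the subdivision points $D,E,F,G$ already marked in the figure identifies the five sub-triangles and their longest (hence hypotenuse) edges: $AGD$ and $DEB$ have a horizontal hypotenuse, of angle $0=(-\gamma)+\gamma$; $GFD$ and $DEF$ have a hypotenuse of slope $-4/3=\tan(-2\gamma)$, of angle $-2\gamma=(-\gamma)-\gamma$; and the remaining piece $ACF$ has a vertical hypotenuse, which is not needed. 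That proves the lemma.

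With the lemma available the induction is routine. The base case $k=0$ holds because the only level-$x$ triangle inside $T$ is $T$ itself and $\{\theta+(2t-0)\gamma : t\in\{0\}\}=\{\theta\}$. For the inductive step, assume $\Angles(T,k)\supseteq\{\theta+(2t-k)\gamma : t\in\{0,\dots,k\}\}$. For each $t\in\{0,\dots,k\}$ fix a level-$(x-k)$ triangle $T_t\subseteq T$ of hypotenuse angle $\theta+(2t-k)\gamma$ (one exists by the inductive hypothesis) and subdivide it; by the substitution lemma its five children, themselves level-$(x-k-1)$ triangles inside $T$, include triangles of hypotenuse angles $\theta+(2t-k)\gamma\pm\gamma$. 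Hence $\Angles(T,k+1)$ contains every number $\theta+(2t-k-1)\gamma$ and every number $\theta+(2t-k+1)\gamma$ with $t\in\{0,\dots,k\}$. The first family of $\gamma$-coefficients is $\{-k-1,-k+1,\dots,k-1\}$ and the second is $\{-k+1,-k+3,\dots,k+1\}$, so their union is $\{-(k+1),-(k+1)+2,\dots,k+1\}=\{2s-(k+1):s\in\{0,\dots,k+1\}\}$, which is exactly the set asserted for $\Angles(T,k+1)$; this closes the induction.

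The one genuine obstacle is the substitution lemma, and the similarity-invariance reduction turns even that into a single finite computation on the canonical tile, whose only subtlety is bookkeeping: pinning down the coordinates of $D,E,F,G$, checking which edge of each of the five sub-triangles is its hypotenuse, and confirming the two slopes $0$ and $-4/3$. (One can also simply read the lemma off \cref{fig:pinwheel-tiling}(A): $AGD$ is the parent rotated by $+\gamma$, and $GFD$ — the red copy of $\mathscr{T}_0$ — is the parent rotated by $-\gamma$.) Everything else is the elementary union-of-progressions identity above. It is worth noting that the same bookkeeping shows the fifth sub-triangle contributes only angles differing from $\phi$ by an odd multiple of $\pi/2$, so the containment in the statement is deliberately lossy; it is the retained family $\{\theta+(2t-k)\gamma\}$ — dense mod $\gamma$, hence (as $\gamma/\pi\notin\Q$) dense in the circle — that will drive \cref{thm:pinwheel-stretch-1}.
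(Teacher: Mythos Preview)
Your proof is correct and follows essentially the same inductive strategy as the paper. The only organizational difference is that the paper descends one level first (picking two level-$(x-1)$ subtriangles $A,B$ of $T$ at angles $\theta\mp\gamma$) and then applies the inductive hypothesis with parameter $k-1$ to each, whereas you apply the inductive hypothesis with parameter $k$ to $T$ first and then subdivide each resulting level-$(x-k)$ triangle once more; the arithmetic on the progressions is the same either way. Your explicit similarity-invariance argument for the substitution lemma is a nice touch---the paper simply reads the $\pm\gamma$ offsets off \cref{fig:rotations} without justifying that the picture persists under rotation and reflection.
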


\begin{proof}
    We proceed by induction on $k$, with the base case $k=0$ being trivial.   Consult \cref{fig:rotations}, where two triangles $A$ and $B$ of level-$(x-1)$ are depicted.  Observe that the hypotenuse of $A$ (resp. $B$) is rotated by an angle of $-\gamma$ (resp. $+\gamma$) relative to that of the exterior triangle.  By the inductive hypothesis,
    \begin{align*}
    \Angles(A,k-1) &\supseteq \{\theta -\gamma + (2t-k+1)\gamma \,\,|\,\, t \in0,\ldots,k-1\} = \{\theta + (2t-k)\gamma \,\,|\,\, t \in0,\ldots,k-1\}\\
    \Angles(B,k-1) &\supseteq \{\theta + \gamma + (2t-k+1)\gamma \,\,|\,\, t \in0,\ldots,k-1\} = \{\theta + (2t-k)\gamma \,\,|\,\, t \in1,\ldots,k\},
    \end{align*}
    and the union of these sets is exactly $\left\{\theta + (2t-k)\gamma \,|\, t \in\{0,\ldots,k\}\right\}$.

\end{proof}

\begin{figure}
\centering
\begin{tikzpicture}[scale=2.5]

\coordinate (A) at (0, 0);
\coordinate (B) at (2, 0);
\coordinate (C) at (0, 1);

\draw[thick] (A) -- (B) -- (C) -- cycle;

\coordinate (D) at ($(A)!1/2!(B)$);
\coordinate (E) at ($(C)!3/5!(B)        $);
\coordinate (F) at ($(C)!1/5!(B)$);
\coordinate (G) at ($(A)!1/2!(F)$);

\draw[thick] (A) -- (F);
\draw[thick] (D) -- (E);
\draw[thick] (F) -- (D);
\draw[thick] (G) -- (D);

\draw[black, fill=red!20] (D) -- (F) -- (G) -- cycle;
\draw[black, fill=blue!20] (G) -- (D) -- (A) -- cycle;

\node at (0.3,0.183) {B};
\node at (0.433,0.45) {A};
\end{tikzpicture}
\caption{\label{fig:rotations} Two triangles within $T$. 
The hypotenuses of $T,A,B$ have angles $-\gamma,-2\gamma,0$, respectively.}
\end{figure}

Consider some $\lambda\in \mathbb{R}$, and denote by 
$\{\lambda\} = \lambda-\lfloor \lambda \rfloor$
the fractional part of $\lambda$.  
It is well known due to Weyl's criterion that if $\lambda$ is irrational, the set $S(\lambda,N)=\left\{\{kx\} : 1\leq k \leq N \right\}$ becomes uniformly distributed in $[0,1)$ as $N\to \infty$~\cite{weyl-thm}.  
Moreover, the rate of convergence is controlled by the \emph{irrationality exponent} $\mu(\lambda)$ which measures the asymptotic quality of rational approximations to $\lambda$.  

For a real number $\lambda$, $\mu(\lambda)$ is defined to be the supremum of the set of real numbers $\mu$ such that
\[
0<\left|\lambda - \frac{p}{q}\right| < \frac{1}{q^\mu}
\]
has only finitely many solutions for positive integers $p,q$.  A finite irrationality exponent implies that a number does not have a sequence of
rational approximations that are ``too good.''  For sets $S\subseteq U\subseteq\mathbb{R}$, we say $S$ is an $\varepsilon$\emph{-cover} of $U$ if for every $x_1\in U$ there exists $x_2\in S$ satisfying $|x_1-x_2|<\varepsilon$.  When $\mu(\lambda)$ is finite, $S(\lambda,N)$ is an $\varepsilon$-cover of $[0,1)$ for $N=O(\varepsilon^{1-\mu(\lambda)-o(1)})$~\cite{epsilon-cover}.  The following corollary follows by applying this fact to \cref{obs:rotations}.

\begin{corollary}\label{cor:uniform-dist}
    Given a triangle $T$ of level-$x$,  $\Angles(T,k)$ 
    is a $\theta$-cover of $[0,2\pi)$ for $k=O(\theta^{1-\mu(\gamma/\pi)-o(1)})$. 
\end{corollary}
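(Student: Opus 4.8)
The plan is to unpack \cref{obs:rotations} into a statement about a one-dimensional arithmetic progression modulo $2\pi$ and then invoke the quantitative equidistribution bound recalled just before the corollary. Write $\theta_0$ for the hypotenuse angle of $T$. By \cref{obs:rotations},
\[
\Angles(T,k) \supseteq \{\theta_0 + (2t-k)\gamma \bmod 2\pi \mid t = 0,\ldots,k\} = \{(\theta_0 - k\gamma) + j\cdot(2\gamma) \bmod 2\pi \mid j = 0,\ldots,k\},
\]
an arithmetic progression of $k+1$ angles with common difference $2\gamma$. Since any superset of a $\theta$-cover is itself a $\theta$-cover, it suffices to show the right-hand set is a $\theta$-cover of $[0,2\pi)$ whenever $k = O(\theta^{1-\mu(\gamma/\pi)})$ — the $\supseteq$ in \cref{obs:rotations} points in exactly the direction we need.

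Next I would rescale by $2\pi$: dividing through, the set becomes $\{c + j\lambda \bmod 1 \mid j = 0,\ldots,k\}$ with $\lambda = \gamma/\pi$ and offset $c = (\theta_0 - k\gamma)/(2\pi)$, and being a $\theta$-cover of $[0,2\pi)$ is equivalent to being a $(\theta/2\pi)$-cover of $[0,1)$. Now discard the $j=0$ term (keeping it can only help), and observe that $\{c + j\lambda \bmod 1 \mid j = 1,\ldots,k\}$ is the translate $c + S(\lambda,k)$ of $S(\lambda,k)$; since translation is an isometry of the circle $\R/\Z$, this translate is a $\delta$-cover exactly when $S(\lambda,k)$ is. Applying the recalled fact — that $S(\lambda,N)$ is a $\delta$-cover of $[0,1)$ for $N = O(\delta^{1-\mu(\lambda)})$ — with $\delta = \theta/(2\pi)$ and $N = k$ shows that $k = O\bigl((\theta/2\pi)^{1-\mu(\gamma/\pi)}\bigr) = O(\theta^{1-\mu(\gamma/\pi)})$ suffices, the factor $2\pi$ being absorbed into the $O(\cdot)$ since $\mu(\gamma/\pi)$ is a fixed constant.

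The only real subtlety is that the bound is meaningful only if $\mu(\gamma/\pi) < \infty$, which I would note is indeed the case: $\gamma/\pi = \arctan(1/2)/\pi$ has a finite irrationality exponent, a consequence of Baker's theorem on linear forms in logarithms applied via $\arctan(1/2) = \tfrac{1}{2i}\log\tfrac{2+i}{2-i}$ (equivalently, that powers of the unit-modulus algebraic number $(2+i)/(2-i)$, which is not a root of unity, cannot approach $1$ faster than polynomially in the exponent). Since $\mu(\lambda) \ge 2$ always, $1-\mu(\gamma/\pi) < 0$, so the estimate correctly says finer covers demand more recursion levels. I expect the genuinely substantial input — already quoted above as a known fact rather than reproved here — to be the quantitative estimate $N = O(\delta^{1-\mu(\lambda)})$ itself; it is standard, following from the continued fraction expansion of $\lambda$ and the three-distance theorem, with the growth of the partial quotients governed by the definition of $\mu(\lambda)$.
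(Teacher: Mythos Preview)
Your proposal is correct and follows exactly the route the paper indicates: the paper merely states that the corollary ``follows by applying this fact to \cref{obs:rotations},'' and you have spelled out precisely that application (rewriting the angle set as a translated arithmetic progression with step $\gamma/\pi$ modulo $1$ and invoking the $S(\lambda,N)$ cover bound). Your additional remarks on the finiteness of $\mu(\gamma/\pi)$ via Baker's theorem anticipate what the paper establishes in the paragraphs immediately following the corollary.
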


Before bounding the irrationality exponent $\mu(\gamma/\pi)$, we quickly review some terminology related to the algebraic numbers $\overline{\mathbb{Q}}$.  For $\alpha\in\overline{\mathbb{Q}}$, its \emph{minimal polynomial} is the unique polynomial $P\in \mathbb{Z}[x]$ of lowest degree with relatively prime coefficients such that $P(\alpha)=0$.  We say the \emph{degree} of $\alpha$ is the degree of $P$, while the \emph{height} of $\alpha$ is the absolute value over coefficients of $P$.

Both $\gamma$ and $\pi$ can be expressed in the form $\beta \ln\alpha$ where $\alpha\in \overline{\mathbb{Q}}$ and $\beta\in \mathbb{Q}(i)$.  Since $e^{i\gamma}=\frac{2+i}{\sqrt{5}}$, $\gamma = -i\ln{(\frac{2+i}{\sqrt{5}})}$.  Similarly, $\pi$ can be written as $-i\ln{(-1)}$.  A great deal of work in the mid-20th century yielded various lower bounds on linear forms in logarithms of algebraic numbers.  
A history of the problem up to 1976 can be found in~\cite{baker1977-exposition}.  
For our purposes, we choose a simple bound due to Baker.

\begin{theorem}[Baker~\cite{baker1977-exposition}]\label{thm:baker}
    Suppose for $n\geq 1$ we have algebraic numbers $\alpha_1,\ldots, \alpha_n, \beta_1, \dots, \beta_n\in \overline{\mathbb{Q}}\setminus 0$.  If the logarithms $\ln{\alpha_i}$ are linearly independent over the rational numbers, then
    \[
    \left|\beta_1\ln{\alpha_1} + \cdots + \beta_n\ln{\alpha_n}\right|>H^{-C},
    \]
    where $H$ is the maximum of the heights of the $\beta_i$ and $C$ is a function of $n$, the numbers $\alpha_i$, and the degrees of the numbers $\beta_i$.
\end{theorem}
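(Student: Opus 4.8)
The plan is to prove \cref{thm:baker} by the classical Gelfond--Baker transcendence method, arguing by contradiction. Write $\Lambda = \beta_1\ln\alpha_1 + \cdots + \beta_n\ln\alpha_n$ for a fixed choice of branches, and suppose toward a contradiction that $|\Lambda| \le H^{-C}$ for a constant $C = C(n,\alpha_1,\ldots,\alpha_n,\deg\beta_1,\ldots,\deg\beta_n)$ to be pinned down at the end; here the case $\Lambda = 0$ (which the statement also rules out) is the degenerate subcase in which an error term appearing below is exactly zero, and it is handled by the same argument. Dividing by $\beta_n\ne 0$ and absorbing the identity $\ln\alpha_n = \beta_n^{-1}(\Lambda - \beta_1\ln\alpha_1 - \cdots - \beta_{n-1}\ln\alpha_{n-1})$, one works with an exponential polynomial that ``does not see'' $\alpha_n$ directly. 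Throughout, every algebraic number that arises (conjugates, products, common denominators) has degree bounded in terms of $n$ and the $\deg\beta_i$, and height/denominator bounded by explicit functions of the construction parameters, $H$, and the fixed $\alpha_i$; tracking these estimates is routine and is precisely what determines $C$.

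\textbf{Step 1 (auxiliary function).} Pick integer parameters $L$ (degree), $h$ (number of base points) and $M$ (order of vanishing) with $(L+1)^{n} \gg hM$. By Siegel's lemma (or the Bombieri--Vaaler refinement), there is a nonzero polynomial $P(X_0,X_1,\ldots,X_{n-1}) = \sum_{\lambda} p(\lambda)\, X_0^{\lambda_0}\cdots X_{n-1}^{\lambda_{n-1}}$ with rational-integer coefficients of controlled size such that the entire function
\[
\Phi(z) = \sum_{\lambda} p(\lambda)\, z^{\lambda_0} \prod_{j=1}^{n-1} \alpha_j^{(\lambda_j + \beta_j\lambda_0)z}
\]
satisfies $\Phi^{(m)}(\ell) = 0$ for all $1 \le \ell \le h$ and $0 \le m < M$. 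The defining linear conditions have algebraic coefficients; clearing denominators and replacing each by the conjunction of the conditions on its conjugates turns them into integer linear equations with more unknowns than equations, so a small nonzero integer solution $\{p(\lambda)\}$ exists.

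\textbf{Step 2 (extrapolation by leapfrogging).} The core is an induction that trades vanishing order for vanishing points. Suppose $\Phi$ and its first $M_k$ derivatives vanish on $\{1,\ldots,R_k\}$. Applying the maximum-modulus principle to $\Phi$ divided by a polynomial with those zeros, on a disk of radius $\approx R_k$, one bounds $|\Phi^{(m)}(\ell)|$ for $\ell$ up to some $R_{k+1}\gg R_k$ and $m < M_{k+1}$ by a quantity that is super-exponentially small in the parameters. On the other hand, when one re-expresses $\Phi^{(m)}(\ell)$ using $\ln\alpha_n = \beta_n^{-1}(\Lambda - \sum_{j<n}\beta_j\ln\alpha_j)$, this value differs from an algebraic number of bounded denominator by an error of size $O(|\Lambda|\cdot(\text{tame factor}))$, which the hypothesis $|\Lambda|\le H^{-C}$ makes negligible. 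An algebraic number of bounded denominator that is smaller than the reciprocal of a bound on its own house must be $0$; hence $\Phi^{(m)}(\ell) = 0$ on the enlarged range. Iterating $O(\log(\text{parameters}))$ times drives $R_k$ up while keeping $M_k \ge 1$.

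\textbf{Step 3 (contradiction).} After enough rounds, $\Phi$ vanishes at more integer points than its shape permits: writing the vanishing conditions as a confluent Vandermonde determinant in the quantities $\alpha_j^{\lambda_j+\beta_j\lambda_0}$ and using that $\ln\alpha_1,\ldots,\ln\alpha_n$ are $\mathbb{Q}$-linearly independent (so the relevant exponents are pairwise distinct and the determinant is nonzero), one forces every $p(\lambda) = 0$, contradicting Step 1. Choosing $C$ large enough in terms of $n$, the $\alpha_i$, and the $\deg\beta_i$ makes every inequality along the way valid. The main obstacle is the parameter bookkeeping in Step 2: one must balance the geometric decay of $M_k$ against the growth of $R_k$, keep all height and denominator bounds uniform in $H$, and verify that a single exponent $C$ — depending on the degrees of the $\beta_i$ but not their heights — closes the loop, i.e.\ that the error $O(|\Lambda|\cdot\text{tame})$ is beaten by $H^{-C}$ at every stage. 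That is where the precise quantitative form of the bound is won.
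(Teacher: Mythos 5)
This statement is not proved in the paper at all: it is imported verbatim as a black-box citation to Baker~\cite{baker1977-exposition}, and every use of it in the paper (bounding the irrationality measure of $\gamma/\pi$) treats the exponent $C$ as an opaque constant. So there is no ``paper's own proof'' to compare against; your attempt is really a sketch of a proof of Baker's theorem itself, which is a major result in transcendence theory.

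Taken on its own merits, your outline follows the right overall shape of the Gelfond--Baker method (auxiliary function via Siegel's lemma, extrapolation by maximum-modulus plus a Liouville-type threshold, contradiction via a zero-counting/determinant argument), but as written it is a roadmap rather than a proof, and two of the places you wave off as ``routine bookkeeping'' are exactly where the real content lives. First, your auxiliary function $\Phi(z)=\sum_\lambda p(\lambda) z^{\lambda_0}\prod_{j<n}\alpha_j^{(\lambda_j+\beta_j\lambda_0)z}$ is the form used when the $\beta_j$ are rational integers; the statement under review allows arbitrary algebraic $\beta_j\in\overline{\mathbb{Q}}\setminus 0$, and handling that case requires a different construction (introducing the $\beta_j$ through additional variables or conjugate tricks) so that the linear conditions in Step~1 become genuinely integral rather than merely ``clear denominators,'' and so that the algebraic numbers you test against Liouville in Step~2 actually have bounded degree. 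Second, the non-vanishing argument in Step~3 is famously \emph{not} a formality: for irrational $\beta_j$ the exponents $\lambda_j+\beta_j\lambda_0$ need not be pairwise distinct as $\lambda$ ranges, and Baker's own treatment needs a separate ``Kummer descent'' or multiplicity-estimate lemma (in modern treatments, a zero estimate on a commutative algebraic group) to force the $p(\lambda)$ to vanish; $\mathbb{Q}$-linear independence of the $\ln\alpha_j$ alone does not make your confluent Vandermonde nonsingular. Finally, the crucial claim that a single $C$ works uniformly in $H$---i.e.\ that $C$ depends on the degrees of the $\beta_i$ but not their heights---is precisely the delicate balancing you postpone; without actually choosing $L,h,M,M_k,R_k$ and verifying the inequalities, the statement is not established. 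If the paper needed a proof it would have to reproduce Baker's argument (or cite a modern account such as the Baker--W\"ustholz theorem), not rely on this sketch.
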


Consider arithmetic expressions of the form $\beta_1\ln{\alpha_1} + \beta_2\ln{\alpha_2}$ where $\alpha_1, \alpha_2\in \overline{\mathbb{Q}}\setminus 0$ are fixed and $\beta_1, \beta_2\in \mathbb{Z}\setminus 0$.  
Now $\beta_1$ and $\beta_2$ have degree $1$, so the exponent $C$ becomes a constant, and $H=\max\{|\beta_1|, |\beta_2|\}$.  Applying Baker's theorem (\cref{thm:baker}) 
and dividing both sides by $|\beta_1\ln{\alpha_2}|$ yields

\[
\left|\frac{\ln{\alpha_1}}{\ln{\alpha_2}} + \frac{\beta_2}{\beta_1}\right| > \frac{1}{|\beta_1\ln{\alpha_2}|H^C} \geq \frac{1}{|\ln{\alpha_2}|H^{C+1}}.
\]

By choosing $\alpha_1 = \frac{2+i}{\sqrt{5}}$ and $\alpha_2 = -1$, the ratio $\frac{\ln{\alpha_1}}{\ln{\alpha_2}}$ is exactly $\gamma/\pi$.  Recall that this number is irrational, satisfying the criteria that $\ln{\alpha_1}$ and $\ln{\alpha_2}$ are linearly independent over the rationals.  Since $\gamma/\pi<1$, clearly we can replace $H$ by $\beta_1$ to obtain $\left|\frac{\ln{\alpha_1}}{\ln{\alpha_2}} + \frac{\beta_2}{\beta_1}\right| > \frac{1}{|\ln{\alpha_2}|\beta_1^{C+1}}$.  Comparing this with the definition of the irrationality exponent, this is sufficient to see that $\mu(\gamma/\pi)$ is at most $C+1$ and therefore finite.

Henceforth let $\mu = \mu(\gamma/\pi)$.  It is worth noting that the constant $C$ given by \cref{thm:baker} is effectively computable, though the order of magnitude is impractical.\footnote{One estimate, also due to Baker~\cite{baker1977-exposition}, states that $C=\ln{A_1}\ln^2{A_2}(16nd)^{200n}$ suffices when the $\beta_i$ are rational, where $A_i$ is the height of $\alpha_i$ and $d$ is the degree of the field extension $\mathbb{Q}[\alpha_1, \alpha_2]/\mathbb{Q}$.  For our purposes, $A_1=4$, $A_2=2$, $n=2$, and $d=4$, giving $C\approx2^{2800}$.  Baker remarks it is possible to argue $C$ is much lower in reality, but the required analysis is quite technical and beyond the scope of this paper.}

\subsection{Convergence of Stretch}

We define $f(d)$ to be the maximum stretch guaranteed by $G_\omega$, over all pairs of vertices at Euclidean distance at least $d$.

\[
f(d) = \sup\left\{\frac{\dist_{G_\omega}(u,v)}{\|u-v\|_2} \;\middle|\; u,v\in V(G_{\omega}) \text{ and } \|u-v\|_2\geq d\right\}.
\]

\begin{lemma}\label{lem:pinwheel-recurrence}
    Fix any distance $d$ and let $f(d)=1+\epsilon$.
    If $d = \Omega(5^{\Theta(\varepsilon^{(1-\mu)/2})})$, then 
    \[
    f(3d) < 1+\epsilon - \Omega\left(5^{-\Theta(\varepsilon^{(1-\mu)/2})}\right).
    \]
\end{lemma}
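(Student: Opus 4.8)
The plan is to prove the quantitative statement that, assuming $f(d)=1+\epsilon$, every pair $u,v\in V(G_\omega)$ with $D:=\|u-v\|_2\geq 3d$ admits a $u$--$v$ path in $G_\omega$ of length at most $(1+\epsilon)D-\tfrac12\epsilon L$, where $L=\Theta(\epsilon D\,5^{-k})$ is a length parameter and $k=\Theta(\theta_0^{1-\mu})=\Theta(\epsilon^{(1-\mu)/2})$ is the value supplied by \cref{cor:uniform-dist} for accuracy $\theta_0=\Theta(\sqrt\epsilon)$; taking the supremum over all such pairs then yields $f(3d)\leq 1+\epsilon-\Omega(\epsilon^2 5^{-k})=1+\epsilon-\Omega(5^{-\Theta(\epsilon^{(1-\mu)/2})})$, since $\log(1/\epsilon)=o(\epsilon^{(1-\mu)/2})$ absorbs the $\epsilon^2$ factor into the exponent. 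The route has three legs: a short ``approach'' from $u$, a long ``free'' leg, and a short ``descent'' into $v$. The free leg is the hypotenuse $H$ of a single small triangle, chosen so that (a)~$H$ points almost exactly from $u$ toward $v$, and (b)~$H$ is short enough to be positioned near the midpoint of $uv$ with negligible positioning error. The structural fact enabling the free leg is that, by induction on levels using the self-similar subdivision, every edge of a level-$i$ triangle of $\mathscr T_\omega$ --- in particular its hypotenuse, of length $\sqrt5^{\,i+1}$ --- is a straight segment that decomposes into atomic edges, hence is a path in $G_\omega$ whose total weight equals its Euclidean length; traversing it incurs zero stretch.

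Concretely, set $\phi=\arg(v-u)$ and $\theta_0=c_1\sqrt\epsilon$, and let $k$ be the value of \cref{cor:uniform-dist} for which the hypotenuses of the level-$(x-k)$ subtriangles of any level-$x$ triangle form a $\theta_0$-cover of $[0,2\pi)$. Choose an integer $m$ with $\sqrt5^{\,m}$ equal, up to a factor $\sqrt5$, to $c_2\,\epsilon D\,5^{-k/2}$ for a small constant $c_2$, let $T'$ be the unique level-$m$ triangle of $\mathscr T_\omega$ containing the midpoint $\tfrac{u+v}{2}$, and apply \cref{cor:uniform-dist} to $T'$: this produces a level-$(m-k)$ subtriangle $T''\subseteq T'$ whose hypotenuse $H$, suitably oriented, points within $\theta_0$ of $\phi$. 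Write $L=\sqrt5^{\,m-k+1}=\Theta(\epsilon D\,5^{-k})$ for the length of $H$ and $p,q$ for its endpoints ($p$ the end nearer $u$); since $T''\subseteq T'$ and $T'$ contains the midpoint, both $p$ and $q$ lie within $\rho:=O(\sqrt5^{\,m})=O(\epsilon D\,5^{-k/2})$ of $\tfrac{u+v}{2}$. The hypothesis $d=\Omega\bigl(5^{\Theta(\epsilon^{(1-\mu)/2})}\bigr)$ with a large enough hidden constant, together with $D\geq 3d$, guarantees $\sqrt5^{\,m}\geq \sqrt5^{\,k}$ (so that $m-k\geq 0$ and $T''$ is a genuine tile) and $\rho<D/6$.

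Now concatenate a shortest $u$--$p$ path, the path $H$, and a shortest $q$--$v$ path. Since $\rho<D/6$ we have $\|u-p\|_2,\|q-v\|_2\geq D/3\geq d$, so $f(d)=1+\epsilon$ applies to both short legs and gives $\dist_{G_\omega}(u,v)\leq (1+\epsilon)\bigl(\|u-p\|_2+\|q-v\|_2\bigr)+L$. An elementary Euclidean estimate --- a length-$L$ sub-segment tilted by at most $\theta_0$ from the $u$-to-$v$ direction lengthens a path by only $O(L\theta_0^2)$, and displacing an endpoint of a length-$D$ chord by $\rho$ near its midpoint costs only $O(\rho^2/D)$ --- yields $\|u-p\|_2+\|q-v\|_2\leq D-L+O(L\theta_0^2+\rho^2/D)$, hence $\dist_{G_\omega}(u,v)\leq (1+\epsilon)D-\epsilon L+O(L\theta_0^2+\rho^2/D)$. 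Taking $c_1$ small makes $O(L\theta_0^2)=O(c_1^2\epsilon L)\leq\tfrac14\epsilon L$; and since $\rho=\Theta(5^{k/2}L)$, taking $c_2$ small (depending only on absolute geometric constants of the tiling) makes $\rho^2/D=\Theta(5^{k}L^2/D)=O(c_2\,\epsilon L)\leq\tfrac14\epsilon L$. What remains is a net saving of at least $\tfrac12\epsilon L=\Omega(\epsilon^2 D\,5^{-k})$, which is the desired $\Omega\bigl(5^{-\Theta(\epsilon^{(1-\mu)/2})}\bigr)\cdot D$. The main obstacle is that the gain $\epsilon L$ and the positioning loss $\rho^2/D$ are of the same order $\epsilon^2 D\,5^{-k}$, so the argument only closes by forcing the constant in $\rho$ below that in the gain --- equivalently, by shrinking $L$ --- while simultaneously respecting the integrality constraint $m\geq k$, which is precisely what dictates the lower bound on $d$; lining up the scales $\rho\gg L\gg\theta_0 L$ with $\theta_0\approx\sqrt\epsilon$ is the delicate bookkeeping, whereas the near-collinear three-leg geometry and the zero-stretch traversal of pinwheel hypotenuses are routine.
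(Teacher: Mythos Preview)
Your proposal is correct and follows essentially the same strategy as the paper's proof: locate a small subtriangle near the midpoint of $\overline{uv}$ whose hypotenuse is nearly parallel to $\overline{uv}$, traverse that hypotenuse as a zero-stretch middle leg, and control the two outer legs via the inductive hypothesis $f(d)=1+\epsilon$. Your second-order bookkeeping $\|u-p\|_2+\|q-v\|_2\le D-L+O(L\theta_0^2+\rho^2/D)$ is equivalent to the paper's projection-ratio estimate $|L_1|/|\ell_1|<1+5\delta^2$ combined with $|L_2|/|\ell_2|\le 1/\cos\theta$, and your parameter choices $\theta_0=\Theta(\sqrt\epsilon)$ and $\sqrt{5}^{\,m}=\Theta(\epsilon D\,5^{-k/2})$ coincide with the paper's $\theta=\sqrt\epsilon$ and $\delta=\Theta(5^{-n/2}\epsilon)$ under the identification $k\leftrightarrow n$.
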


\begin{proof}

    Consider any two vertices $P,Q\in G_\omega$ with $|\overline{PQ}|=D\geq3d$ when we regard them as points in the plane.  
    Begin by choosing a parameter $\delta=\delta(\epsilon)$ and identifying a triangle $T$ satisfying the following properties: $T$ intersects $\overline{PQ}$, the projection of $T$ onto $\overline{PQ}$ lies entirely within the middle third of $\overline{PQ}$, and the hypotenuse of $T$ has length exactly $\delta D$.  Now choose a parameter $\theta=\theta(\epsilon)$ and let 
    $n=n(\theta)$ be large enough such that $\Angles(T,n)$ is a $\theta$-cover of $[0,\pi)$.  Finally, select a triangle $T'$ within $T$, and exactly $n$ levels below $T$, whose hypotenuse creates an angle less than $\theta$ with $\overline{PQ}$.

\medskip
    \centerline{\includegraphics[width=0.95\linewidth]{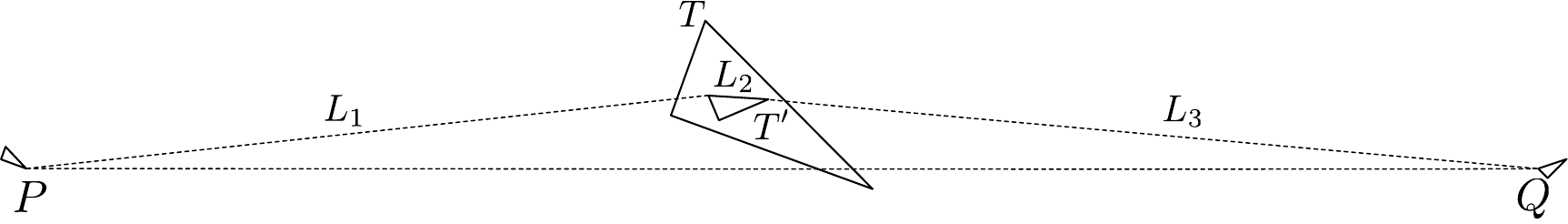}}
\medskip 

    We now construct a path as follows.  Let $p$ and $q$ be the endpoints of the hypotenuse of $T'$ closer to $P$ and $Q$ respectively.  Let $L_1=\overline{Pp}$, $L_2 = \overline{pq}$, and $L_3=\overline{Qq}$.  $|L_1|$ and $|L_3|$ must each be at least $d$, so 
    by the definition of the stretch function $f$,
    $G_\omega$ approximates $L_1$ and $L_3$ to within $1+\varepsilon$ stretch.

    Let $\ell_1$, $\ell_2$, and $\ell_3$ be the projections of $L_1$, $L_2$, and $L_3$
    onto $\overline{PQ}$.  We claim that $\dist_{G_\omega}(P,p)/|\ell_1| < (1+\varepsilon)(1+5\delta^2)$.  Since $p$ lies within $T$, its projection onto $\overline{PQ}$ is a distance at most $\delta D$ from $p$ and at least $D/3$ from $P$. Therefore, $|L_1|/|\ell_1|$ is bounded by $\sqrt{(D/3)^2+(\delta D) ^2}/(D/3) = \sqrt{1+9\delta^2}<1+5\delta^2$.  By definition of $f$, 
    $\dist_{G_\omega}(P,p)/|L_1| \leq 1+\varepsilon$.  The same is true for $\dist_{G_\omega}(q,Q)/|\ell_3|$.  On the other hand, $\dist_{G_\omega}(p,q)/|\ell_2|$ is trivially bounded by $|L_2|/|\ell_2|\leq 1/\cos\theta$ since $p$ and $q$ are joined directly by an edge in $G_\omega$.  Combining the bounds on each component, we write

    \[\dist_{G_\omega}(P,Q) <  (1+\varepsilon) (1+5\delta^2)(|\ell_1|+|\ell_3|) + \frac{|\ell_2|}{\cos\theta}.\]

   We normalize by $1/D$ to obtain the stretch, and because this inequality holds for arbitrary $P,Q$ satisfying $|\overline{PQ}|\geq 3d$, it holds for the supremum as well.
    
    \begin{align*}
        f(3d) &= \sup_{P,Q\in V(G_{\omega})}\left\{\frac{\dist_{G_\omega}(P,Q)}{|\overline{PQ}|}  \;\middle|\; |\overline{PQ}|\geq 3d\right\}\\
        &< (1+\varepsilon)(1+5\delta^2)\left(1-\frac{|\ell_2|}{D}\right) + \frac{|\ell_2|}{D\cos\theta}\\
        &\leq (1+\varepsilon)(1+5\delta^2)(1-5^{-n/2}\delta \cos\theta) + 5^{-n/2}\delta
\intertext{Letting $\kappa = 5^{-n/2}$, this is upper bounded by}
        &< 1+\varepsilon + 6\delta^2 + \kappa\delta(1 - (1+\varepsilon)\cos\theta) 
\intertext{At this point we fix $\theta=\sqrt{\epsilon}$, so $\cos\theta > 1-\epsilon/2$.} 
        &< 1+\varepsilon + 6\delta^2 - \kappa\delta\epsilon(1-\epsilon)/2
        = 1+\varepsilon + \delta(6\delta - \kappa\epsilon(1-\epsilon)/2)\\
\intertext{Finally, we pick $\delta < \kappa\epsilon/24$.}
        &\leq 1+\varepsilon - \Omega(\delta^2)
        = 1+\epsilon - \Omega(5^{-n}\epsilon^2)
        = 1+\epsilon - \Omega(5^{\Theta(\epsilon^{(1-\mu)/2})}).
    \end{align*}
The last line follows from \cref{cor:uniform-dist}, which states 
$n(\theta)=n(\sqrt{\epsilon})=(\sqrt{\epsilon})^{1-\mu}$.
    
    The only remaining detail is to address that the triangles $T$ and $T'$ used in this argument do indeed exist.   The length of the hypotenuse of $T$ was taken to be $\delta D$, and then we chose $T'$ to be $n$ levels beneath that of $T$.  Therefore, we have implicitly assumed $D\geq5^{n/2}/\delta > \Omega(5^{n}/\varepsilon)$, consistent with the assumption in the statement of the lemma.
\end{proof}

\cref{thm:pinwheel-stretch-1} now follows easily from this lemma.

\begin{proof}
    Take some initial constants $D_0,\varepsilon_0$ with $f(D_0)=1+\varepsilon_0$ and consider the sequence $\{\varepsilon_i\}$ where $\varepsilon_i = f(3^{i}D_0)-1$.  
    Define $i_{\operatorname{half}}$ to be the minimum value
    such that 
    $\varepsilon_{i_{\operatorname{half}}}\leq \varepsilon_0/2$.
    By \cref{lem:pinwheel-recurrence}, if $D$ is sufficiently large, we have that $\varepsilon_i-\varepsilon_{i+1}>\Omega\left( 5^{-\Theta\left(\varepsilon_i^{(1-\mu)/2}\right)}\right)
    >
    \Omega\left( 5^{-\Theta\left(\varepsilon_0^{(1-\mu)/2}\right)}\right)$ as long as $\varepsilon_i>\varepsilon_0/2$.
    Therefore, $i_{\operatorname{half}} = O\left(5^{\Theta\left(\varepsilon_0^{(1-\mu)/2}\right)}\right)$.
    If our target stretch is $1+\varepsilon$
    we can apply this halving argument $k=\log\epsilon^{-1}$ times, implying 
    $f(3^{i^*}D_0) < 1+\varepsilon$ for 
    \[
    i^* = O\left(5^{\Theta\left(\varepsilon_0^{(1-\mu)/2}\right)} + 
    5^{\Theta\left((\varepsilon_0/2)^{(1-\mu)/2}\right)}
    +\cdots+
    5^{\Theta\left((\varepsilon_0/2^k)^{(1-\mu)/2}\right)}\right)
    =
    O\left(5^{\Theta\left(\varepsilon^{(1-\mu)/2}\right)}\right).
    \]
    
Note that for $D=3^{i^*}D_0$, 
$\epsilon = \Theta(\log^{\xi}\log D)$, for $\xi=2/(1-\mu) = \Theta(1)$.
\end{proof}

\subsection{Pinwheel Tilings on the Grid}

\begin{theorem}\label{thm:pinwheel-grid-stretch-1}
    There exists a weight function $w$ such that 
    $\Grid[w]$ has stretch 1, asymptotically.  
    In particular, for any $u,v\in V(\Grid)$,
    \[
    \|u-v\|_2 - O(1) \leq \dist_{w}(u,v) \leq  (1+O(1/(\log\log\|u-v\|_2)^\xi))\cdot \|u-v\|_2,
    \]
    for some $\xi > 0$.
\end{theorem}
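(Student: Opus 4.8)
The plan is to embed the (scaled) pinwheel tiling into the integer grid so that \cref{thm:pinwheel-stretch-1} transfers with only an additive $O(1)$ loss. Fix a large constant $N$. For each vertex $\hat v$ of $G_\omega$ let $\rho(\hat v)$ be the grid point nearest to $N\hat v$, and represent $\hat v$ by a constant-size connected \emph{gadget} $H_{\hat v}\subseteq V(\Grid)$ containing $\rho(\hat v)$, all of whose internal edges get weight $0$; the gadget supplies one distinguished ``port'' vertex for each of the $O(1)$ atomic edges of $G_\omega$ incident to $\hat v$. For each atomic edge $\{\hat u,\hat v\}$ of $G_\omega$, join the appropriate ports of $H_{\hat u}$ and $H_{\hat v}$ by a Bresenham-type digital-line staircase $S_{\hat u\hat v}$ in $\Grid$ that stays within $O(1)$ of the segment $\overline{N\hat u\,N\hat v}$, and give every edge of $S_{\hat u\hat v}$ the common weight $N\|\hat u-\hat v\|_2/m_{\hat u\hat v}$, where $m_{\hat u\hat v}$ is the number of edges of $S_{\hat u\hat v}$ (so the staircase has total weight exactly the true scaled length $N\|\hat u-\hat v\|_2$). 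Every remaining grid edge gets a fixed constant weight $w_0$.

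The first thing to establish is that, for $N$ a large enough constant, the gadgets and staircases can be chosen pairwise disjoint, each staircase sharing only its two endpoints with the two gadgets it joins; this is what makes $w$ well defined. Non-adjacent atomic edges of $G_\omega$ are a constant distance apart, so after scaling by $N$ their staircases (being within $O(1)$ of the segments) cannot collide; two atomic edges sharing a vertex $\hat v$ leave $N\hat v$ at pairwise angles bounded below by $\gamma=\arctan(1/2)$, so their staircases separate after a constant distance, and the only real work is to route the $O(1)$ initial portions out of the constant-size neighborhood of $H_{\hat v}$ without collision, a finite routing problem that becomes solvable once $N$ (hence the amount of room) is a large enough constant. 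This step, together with the decision to keep all geometric discrepancies additive rather than multiplicative, is the main technical obstacle.

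For the upper bound, given grid points $a,b$, pick pinwheel vertices $\hat u,\hat v$ with $N\hat u,N\hat v$ within $O(N)=O(1)$ of $a,b$. Walking from $a$ into $H_{\hat u}$ and from $H_{\hat v}$ to $b$ costs $O(1)$, since every grid edge has weight $O(1)$ and the walk has $O(N)=O(1)$ edges. Between $H_{\hat u}$ and $H_{\hat v}$, follow a shortest path of $G_\omega$: every gadget it passes through is free, and each atomic edge $\{\hat x,\hat y\}$ it uses is replaced by the full staircase $S_{\hat x\hat y}$, contributing exactly $N\|\hat x-\hat y\|_2$; the telescoping total is exactly $N\cdot\dist_{G_\omega}(\hat u,\hat v)$, with no accumulation of rounding error. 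By \cref{thm:pinwheel-stretch-1} this is $(1+O(1/(\log\log\|\hat u-\hat v\|_2)^\xi))\cdot N\|\hat u-\hat v\|_2$; since $N\|\hat u-\hat v\|_2=\|a-b\|_2\pm O(1)$ and $\log\log\|\hat u-\hat v\|_2=\log\log\|a-b\|_2-o(1)$, the additive $O(1)$ is absorbed into the multiplicative term for large $\|a-b\|_2$, giving the stated upper bound.

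For the lower bound we exhibit a map $\psi:V(\Grid)\to\R^2$ with $\|\psi(z)-z\|_2\le c$ for an absolute constant $c$, and for each grid edge $e=\{z,z'\}$ the straight segment $\gamma(e)$ from $\psi(z)$ to $\psi(z')$, arranged so that $|\gamma(e)|\le w(e)$. Set $\psi\equiv N\hat v$ on all of $H_{\hat v}$ (so $|\gamma(e)|=0=w(e)$ on gadget edges); on an internal vertex $z_i$ of $S_{\hat u\hat v}$ set $\psi(z_i)=N\hat u+\tfrac{i}{m_{\hat u\hat v}}(N\hat v-N\hat u)$, the point a fraction $i/m$ along the true segment, which the digital-line property keeps within $O(1)$ of $z_i$ and which makes $|\gamma(e)|$ exactly $w(e)$ on staircase edges; and let $\psi$ be the identity elsewhere, where $|\gamma(e)|\le 1+2c$, so that $w_0:=1+2c$ covers the remaining edges. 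Concatenating $\gamma(e_1),\dots,\gamma(e_L)$ along any $a$--$b$ path $\pi=(a=v_0,\dots,v_L=b)$ in $\Grid[w]$ produces a plane curve from $\psi(a)$ to $\psi(b)$ of length at most $\sum_i w(e_i)$, whence $\dist_w(a,b)\ge\|\psi(a)-\psi(b)\|_2\ge\|a-b\|_2-2c=\|a-b\|_2-O(1)$. The reason for weighting staircase edges at the true scaled length and for invoking the trivial bound $\dist_{G_\omega}\ge\|\cdot\|_2$ only once, globally, is precisely that these choices keep the error additive rather than letting an $O(1)$ term accumulate per edge.
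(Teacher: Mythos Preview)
Your approach is essentially the same as the paper's: scale up the pinwheel tiling, embed each atomic edge as a digital-line staircase whose total weight equals the scaled Euclidean length, give leftover grid edges a large constant weight, and invoke \cref{thm:pinwheel-stretch-1}. The paper handles vertex conflicts by allowing adjacent embedded paths to share a prefix of at most two edges (weighted~$1$) rather than via your weight-$0$ gadgets with ports, but this is a cosmetic difference. Your lower-bound argument via the displacement map $\psi$ is more explicit than the paper's, which simply asserts that off-highway edges are weighted ``sufficiently high so that it is never advantageous to use them''; your version is arguably cleaner, since it directly yields $\dist_w(a,b)\ge\|a-b\|_2-O(1)$ for \emph{every} path, without needing to argue that shortest paths stay on the embedded pinwheel graph. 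One small imprecision: your claim that any two atomic edges sharing a vertex meet at angle at least $\gamma$ overlooks the collinear edges arising at T-junctions (where a vertex of one atomic triangle lies in the interior of an edge of another), but collinear staircases are trivial to route and the claim is correct for non-collinear pairs, so the disjointness argument goes through.
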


\begin{proof}
    Regard $G_{\omega}$ as the plane graph of a tiling whose 
    atomic tile has large side lengths, say $25,50,25\sqrt{5}$.
    The vertices of $G_{\omega}$ generally do not have integer coordinates.
    Let $\phi : V(G_{\omega}) \to V(\Grid)$ map 
    any $u\in V(G_{\omega})$ to the nearest integer point 
    $\phi(u)\in \Z^2$.  
    We will overload this notation a bit and let $\phi : E(G_{\omega}) \to 2^{E(\Grid)}$ be such that $\phi(\{u,v\})$ is a monotone 
    path in $\Grid$ connecting $\phi(u)$ to $\phi(v)$ cleaving closely
    to the $\overline{uv}$ line segment, with the property
    that any two paths $\phi(\{u,v\}),\phi(\{u,v'\})$ only 
    intersect in a prefix of at most 2 edges, 
    and any $\phi(\{u,v\}),\phi(\{u',v'\})$ 
    (with $u,v,u',v'$ distinct) do not intersect at all.  

\medskip
\centerline{\includegraphics[scale=.2]{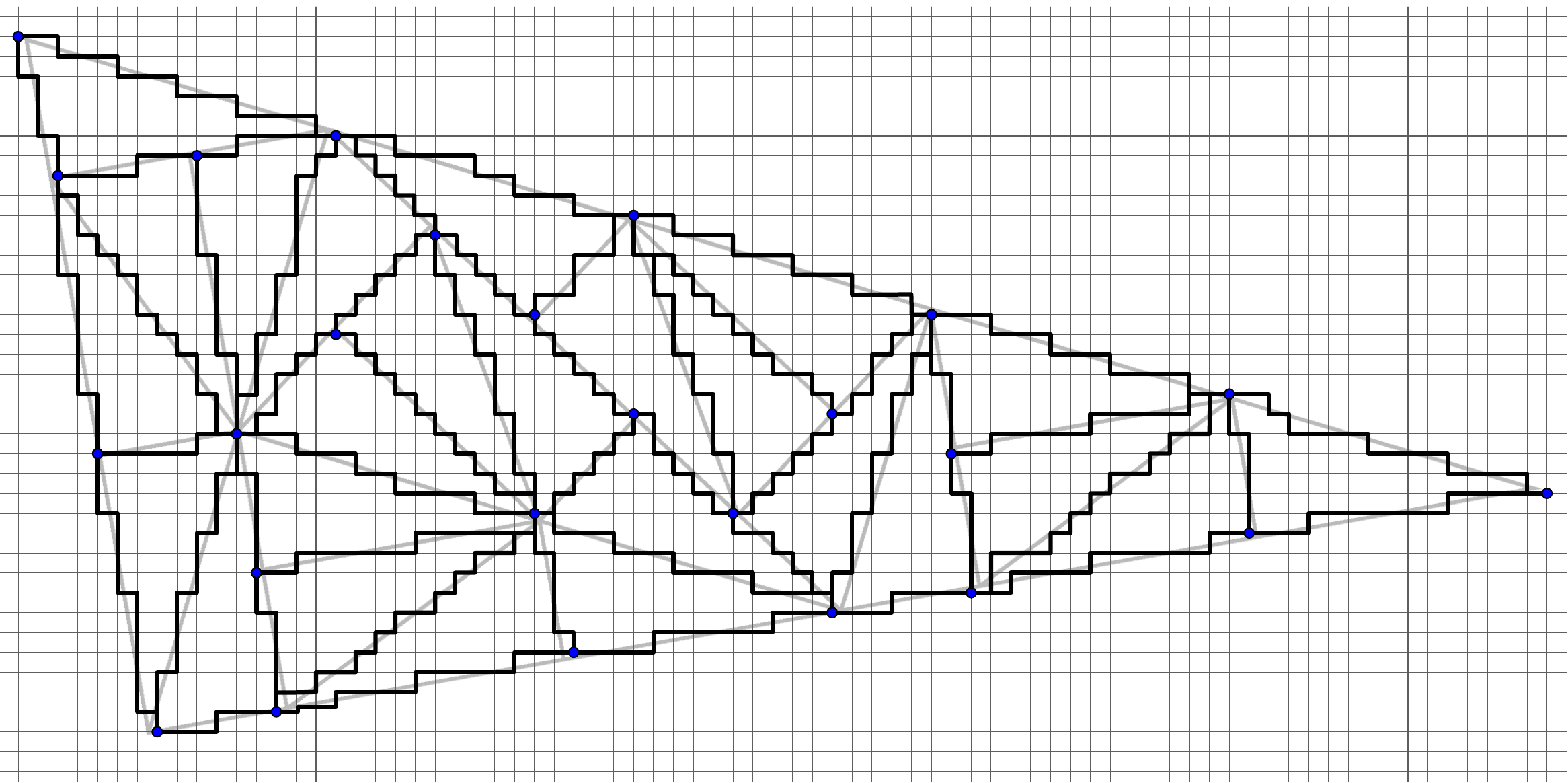}}
\medskip

    The edge weights are assigned as follows.  If $e$ is not in $\bigcup_{e'\in E(G_{\omega})} \phi(e')$ then $w(e) = 10$.
    If $e$ is in two distinct paths $\phi(e'),\phi(e'')$ then $w(e)=1$. 
    The remaining edge weights of $\phi(\{u,v\})$ are chosen to be 
    equal, such that
    \[ 
        w(\phi(\{u,v\})) = \sum_{e\in \phi(\{u,v\})} w(e) = \|u-v\|_2.
    \]
    In other words, walking from $\phi(u)$ to $\phi(v)$ along $\phi(\{u,v\})$ is precisely the Euclidean distance $\|u-v\|_2$.  Depending on the angle of the $u$-$v$ line, the ``ideal'' weight of edges on $\phi(\{u,v\})$ is in the range $[1/\sqrt{2},1]$, but the true weights lie in the range $[0.6,1.05]$.  The internal edges of $\phi(\{u,v\})$ may need to have weight less than $1/\sqrt{2}$ due to rounding $u,v$ to farther integer points $\phi(u),\phi(v)$, and correcting for up to four edges on the ends of $\phi(\{u,v\})$ with weight 1.  Similarly, the internal edges of $\phi(\{u,v\})$ may need to have weight greater than 1 due to rounding $u,v$ to closer integer points $\phi(u),\phi(v)$.  However, one may verify that the length of every subpath of $\phi(\{u,v\})$ 
    from $u'$ to $v'$ differs from its Euclidean length $\|u'-v'\|_2$ by at most 2.

    By \cref{thm:pinwheel-stretch-1}, for any $u,v\in V(\Grid)$, $\dist_w(u,v) \leq (1+O(1/(\log\log\|u-v\|_2)^\xi))\cdot \|u-v\|_2$.
    One walks from $u$ to a nearby $\phi(u_0)$ vertex, then along embedded paths of the pinwheel graph $G_\omega$ to a $\phi(v_0)$ near $v$, then along a path from $\phi(v_0)$ to $v$.  By design, the length of the path from $\phi(u_0)$ to $\phi(v_0)$ is precisely $\dist_{G_\omega}(u_0,v_0)$, while the $u$-$u_0$ and $v_0$-$v$ paths have length $O(1)$.
    The weight of edges outside of $\bigcup_{e\in E(G_\omega)} \phi(e)$ is set sufficiently high so that it is never advantageous to use them in lieu of paths in $\bigcup_{e\in E(G_\omega)} \phi(e)$.
\end{proof}

\section{A Deterministic Construction with Faster Convergence}\label{sect:highway-construction}

In this section, we give a deterministic construction based on \emph{highways} with faster convergence.
Specifically, we establish the following result as \cref{thm:highway1}:

\begin{theorem}\label{thm:highway1}
There exists an assignment
$W : E(\Grid)\to \mathbb{R}_{\geq 0}$ 
such that for any $u,v\in V(\Grid)$,
\[
\|u - v\|_2 - 1 \le \dist_W(u, v) \le \|u - v\|_2 + O\left(\|u - v\|_2^{\frac{8}{9}}\right).
\]
\end{theorem}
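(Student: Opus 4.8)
The plan is to build the weight assignment $W$ by superimposing a hierarchy of ``highways'' onto the grid, where a level-$k$ highway is a monotone grid path that closely approximates a straight segment of some slope, with all its internal edges given the common weight $\frac{\sqrt{a^2+1}}{a+1}$ (for slope $a\in[0,1]$, and the symmetric value otherwise) so that traversing length $\ell$ along the highway costs $(1+o(1))\ell$ in Euclidean terms. The key design parameters are: the \emph{granularity} of slopes at level $k$ (we want a net of $\Theta(2^{k})$ or so slopes that is $\epsilon_k$-dense in $[0,\pi)$), the \emph{spacing} between parallel highways of a given slope, and the \emph{length scale} at which level-$k$ highways ``live.'' To get $8/9$ we will want the level-$k$ highways to span distance roughly $N_k = 2^{\Theta(k)}$, slope-density $\epsilon_k \approx N_k^{-1/9}$ roughly, and spacing chosen so that from any grid point one reaches a highway of (nearly) the right slope after a detour of $O(N_k^{8/9})$ — the exponents need to be balanced so that the three sources of error (angular error $\epsilon_k\cdot D$, the detour to get onto and off of a highway, and the rounding/intersection corrections) all come out to $O(D^{8/9})$ when $D\approx N_k$.

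First I would set up the single-slope gadget: given a rational (or real) slope, describe the monotone staircase path in $\Grid$ approximating the line, verify that every subpath's weighted length is within $O(1)$ (additive) of the corresponding Euclidean length once the common edge weight is fixed, and note the analogous computation to the pinwheel-to-grid embedding in \cref{thm:pinwheel-grid-stretch-1}. Second I would describe the full highway system level by level, being explicit about the slope net, the spacing of parallels, and — crucially — what happens at intersections. Following the remark in the introduction, the cleanest move is to \emph{avoid} intersections altogether: arrange that highways of different levels/slopes occupy disjoint edge sets (e.g., by routing lower-priority highways to make a small local detour around the footprint of a higher-priority one, or by partitioning the plane into cells so that within a cell only boundedly many highways appear), at a cost of only $O(1)$ additive error per avoided crossing and with total crossings along any route bounded by $O(D^{8/9})$. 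Background edges (those in no highway) get a large weight, say $10$, so they are never used over long distances. Third, for the routing upper bound: given $u,v$ at distance $D$, pick the level $k$ with $N_k\approx D$, walk from $u$ off to the nearest highway whose slope is within $\epsilon_k$ of the direction $\widehat{uv}$ (detour $O(D^{8/9})$), travel along it (cost $(1+O(\epsilon_k^2))D + O(\text{corrections})$, and $\epsilon_k^2 D \le D^{1-2/9}\cdot$stuff must be absorbed — here I'd actually want $\epsilon_k\approx D^{-1/9}$ so $\epsilon_k^2D = D^{7/9}$, comfortably inside $D^{8/9}$), then detour $O(D^{8/9})$ to reach $v$; sum the additive $O(1)$ corrections over the $O(D^{8/9})$ segments/crossings. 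For the lower bound $\dist_W(u,v)\ge \|u-v\|_2 - 1$, I would show every edge weight times its ``horizontal+vertical extent'' is at least its Euclidean extent — i.e. the weight on any edge is at least $\frac{1}{\sqrt2}$ along diagon-ish directions appropriately, more precisely that the weighted length of any path dominates (up to an additive $1$ from a single possibly-cheap edge) the $\ell_2$ displacement; this is a per-edge convexity check.

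The main obstacle I expect is the intersection handling combined with the accounting of additive errors. Each individual crossing or rounding correction contributes only $O(1)$, but a route of length $D$ may pass through $\Theta(D^{8/9})$ highways and crossings, so the constants and the \emph{number} of corrections must be controlled simultaneously; getting the three exponents (slope density, highway length scale, spacing/detour length) to align so that the worst case is exactly $D^{8/9}$ — rather than, say, $D^{8/9}\log D$ or a worse polynomial — requires a careful choice of the geometric progressions defining the hierarchy and a careful argument that highways of \emph{unneeded} levels crossing the route don't force extra detours. A secondary subtlety is ensuring the ``avoid intersections'' detours are themselves routable within the grid without cascading (a detour around one highway must not be forced to cross many others), which I would handle by making the spacing at each level large compared to the footprint of all higher-priority highways passing through a cell, so that local detours stay local.
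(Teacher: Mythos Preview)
Your high-level approach---a hierarchy of highways with a slope net, heavy off-highway edges, and routing via the nearest highway of approximately the right direction---is exactly the paper's. The parameter regime (angular resolution $\approx D^{-1/9}$, additive error $D^{8/9}$) is also right. But two of your mechanisms are underspecified in ways that the paper resolves with specific ideas you are missing.

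\textbf{Intersections and the crossing count.} The paper does not reroute highways around one another. Instead it \emph{deletes} from each level-$i$ line every portion within distance $k_i$ of any already-built segment (Steps~1 and~2), so highways are vertex-disjoint by fiat. When routing along a level-$i$ highway $\ell^*$ and hitting a gap caused by a lower-level segment $s$, the traveler detours \emph{onto $s$ itself}, walks along $s$, and rejoins $\ell^*$ on the far side; each such detour costs $O(k_i)$. The reason the number of gaps does not blow up is the parameter recursion $k_{i+1}=\lfloor\sqrt{k_i}\rfloor$: by Lemma~3.2 any two segments at levels $<i$ are at Euclidean distance $\ge k_{i-1}\ge k_i^2$, so along a stretch of length $L$ there are only $O(L/k_i^2)$ gaps, totalling $O(L/k_i)$ extra. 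Your ``make spacing at each level large compared to the footprint of higher-priority highways'' is the right instinct, but without the square-root relation between consecutive levels (or something equivalent) you have no bound on the crossing count, and your worry about cascading detours is exactly the failure mode that the delete-and-detour-onto-$s$ trick avoids.

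\textbf{Lower bound.} A per-edge convexity check cannot work: a grid edge has Euclidean length $1$ but highway weight as small as $1/\sqrt{2}$, so a long run of highway edges can underpay by a lot, not just an additive $1$. The paper's argument (in Lemma~3.3) instead decomposes any path as $B_0A_1B_1\cdots A_kB_k$, where each $A_j$ lies in a single highway segment and each $B_j$ is off-highway. By the single-highway lemma each $A_j$ underpays by at most $1$, while each interior $B_j$ contains at least one weight-$2$ edge and hence overpays by at least $1$; the surpluses cancel the deficits, leaving only a global $-1$. This is why off-highway weight $2$ (not $10$) suffices, and why the construction must make highway segments genuinely vertex-disjoint rather than merely edge-disjoint.
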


We prove \cref{thm:highway1} in two steps.  First, we show that the same statement holds for the finite square grid $[n]\times [n]$ (\cref{thm:highway2}), 
then we give a black-box reduction from the 
infinite case to the finite case.

\begin{theorem}
\label{thm:highway2}
There exists a weight assignment $W(n)$ to edges of the finite grid on $[n]\times[n]$ such that
for any $u, v\in [n]^2$,
\[
\|u - v\|_2 - 1 \le \dist_{W(n)}(u, v) \le \|u - v\|_2 + O\left(\|u - v\|_2^{\frac{8}{9}}\right).
\]
\end{theorem}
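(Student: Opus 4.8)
The plan is to build a hierarchical family of ``highways'' — monotone lattice paths that hug straight lines of carefully chosen slopes — organized in $O(\log n)$ levels, with the density of slopes increasing as the level increases. At the coarsest level we place highways with a few well-spaced slopes spanning the whole grid; at finer levels we place shorter highways with denser sets of slopes, so that any direction in $[0,\pi)$ is approximated by some available slope to within an additive error that shrinks geometrically. A highway of slope $a$ gets uniform edge weight equal to the ``ideal'' value (e.g.\ $\tfrac{\sqrt{a^2+1}}{a+1}$ for $a\in[0,1]$, and the symmetric expression otherwise), so that traveling along it costs exactly (up to additive $O(1)$ rounding) the Euclidean length of the line it follows. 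Edges not on any highway get a large constant weight so they are never used except near the endpoints $u,v$.

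The core routing argument is the workhorse. Given $u,v$ with $\|u-v\|_2 = D$, I would choose the level $j$ whose highways have slope-granularity $\approx D^{-\beta}$ for a constant $\beta$ to be optimized, find a highway $h$ at that level whose slope $a$ is within $O(D^{-\beta})$ of the direction $\overline{uv}$ and which passes ``close enough'' to the segment $\overline{uv}$ (within distance $O(D^{1-\beta})$, say), walk from $u$ to the nearest on-ramp of $h$, travel along $h$, and exit at the off-ramp nearest $v$. The detour incurred has three sources: (i) the angular mismatch between $a$ and $\overline{uv}$, contributing a multiplicative $1/\cos(O(D^{-\beta})) = 1 + O(D^{-2\beta})$, hence additive $O(D^{1-2\beta})$; (ii) the perpendicular offset of $h$ from $\overline{uv}$ together with getting on and off the highway, contributing additive $O(D^{1-\beta})$; (iii) rounding of lattice paths to integer coordinates and the uniform-weight discretization, contributing additive $O(1)$ per highway segment used and $O(\text{polylog})$ in total, or — if we are more careful and recurse — a lower-order term. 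Balancing the dominant error terms against the cost of the recursion (a detour may itself need to be routed along a finer highway, giving a recursive bound $g(D) \le O(D^{1-\beta}) + g(O(D^{1-\beta}))$ type recurrence) is what produces the exponent $8/9$; the arithmetic should pin $\beta$ so that the geometric series of detour costs sums to $O(D^{8/9})$. The lower bound $\dist_{W(n)}(u,v) \ge \|u-v\|_2 - 1$ is immediate since every edge weight is at least its ideal value, which is at least $1/\sqrt{2}$, and an $L_2$ ball argument against any monotone lattice path gives the claim up to the additive $1$.

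The main obstacle — and the part the authors flag as ``the tricky part'' — is handling \textbf{intersecting highways}. With infinitely many slopes dense in $[0,\pi)$ and infinitely many parallel copies of each, the highways cross each other in a complicated pattern, and a naive assignment would let a shortest path ``cheat'' by hopping between highways at intersections, or would create conflicts in the uniform-weight requirement (an edge lying on two highways of different slopes cannot simultaneously carry both ideal weights). My plan is to break symmetry between levels: whenever two highways of different levels would cross, the finer (higher-level, lower-priority) highway yields — it is either locally rerouted around the crossing with a small $O(1)$ detour, or truncated so that the longer/coarser highway owns the shared region. Because level-$j$ highways are only length $\approx D^{1-j\beta'}$ and there are $O(\log n)$ levels, the total perturbation from these local reroutings is a lower-order additive term. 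One must then re-verify the routing argument survives: the highway $h$ chosen at level $j$ might be chopped into pieces, but each piece is still long enough (length $\gg$ its own granularity $\times$ error budget) that we can hand off between consecutive pieces along a coarser highway paying only the recursive detour cost already accounted for. Making the bookkeeping of ``who yields to whom'' consistent and globally realizable — a total order on highways such that every edge is claimed by at most one — is the delicate design step; I expect to define it by, say, lexicographic order on (level, slope, offset) and prove by a counting/packing argument that the cumulative displacement stays within the $O(D^{8/9})$ budget.
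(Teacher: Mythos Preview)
Your high-level picture (hierarchical highways, priority-based conflict resolution, route along the best-matching highway) matches the paper's, but the quantitative core has a genuine gap. The recursion you write, $g(D)\le O(D^{1-\beta})+g(O(D^{1-\beta}))$, solves to $g(D)=O(D^{1-\beta})$ for \emph{every} $\beta\in(0,1)$: the terms form a super-geometrically decreasing series, so there is nothing to balance and nothing that forces the exponent $8/9$. The parameter you are missing is the \emph{spacing} between parallel highways at a level. In the paper, the level indexed by $k$ has $k$ evenly spaced angles and parallel lines at perpendicular distance $k^4$; the on-ramp cost to reach the nearest highway is therefore $O(k^4)$, while both the angular error and the gap-crossing cost are $O(d/k)$. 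The ideal balance $k^4=d/k$ would give $k=d^{1/5}$ and error $d^{4/5}$, but the levels are fixed once and for all (via $k_1=\lfloor n^{1/5}\rfloor$, $k_{i+1}=\lfloor k_i^{1/2}\rfloor$), so for a given $d$ one can only guarantee some $k_i\in[d^{1/9},d^{2/9}]$; the worst case over that window yields $O(d^{8/9})$.

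Two further points where your plan diverges from what actually works. First, the priority is the reverse of what you describe: the level with the \emph{most} slopes (level $1$, with $k_1$ angles) has highest priority and is never truncated; higher-numbered levels (fewer slopes, tighter spacing) yield to lower ones by deleting a $k_i$-neighborhood around every earlier segment. Routing then travels on the chosen level-$i$ highway and, at each gap caused by a lower-level segment $s$, takes a one-hop detour \emph{along $s$ itself}---no recursion. What makes this cheap is a separation lemma: after truncation, any two distinct segments are $\ge k_i$ apart, so lower-level segments near $\ell^*$ are $\ge k_{i-1}\ge k_i^2$ apart, giving at most $O(d/k_i^2)$ detours of cost $O(k_i)$ each. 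Second, your lower-bound sketch (``every edge weight is at least its ideal value $\ge 1/\sqrt2$'') does not give $\|u-v\|_2-1$: a path could ride a highway of the \emph{wrong} slope and accumulate many cheap edges without making Euclidean progress toward $v$. The paper's argument instead decomposes any path into on-highway pieces $A_j$ and off-highway pieces $B_j$, uses Lemma~\ref{lem:highway-approx} to get $W(A_j)\ge\|\text{endpoints}\|_2-1$, and uses that off-highway edges carry weight $2$ so each interior $B_j$ contributes $\ge\|\text{endpoints}\|_2+1$, compensating the $-1$'s.
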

The proof of \cref{thm:highway2} follows from the constructions presented in \cref{subsec:highways} and \cref{subsec:hierarchical-construction}.

\subsection{Highways}
\label{subsec:highways}

Given a line $\ell = ax +b$ in $\mathbb{R}^2$
we define the $\Highway(\ell)$ to be a grid-path 
that tracks $\ell$; see \cref{fig:highway}.

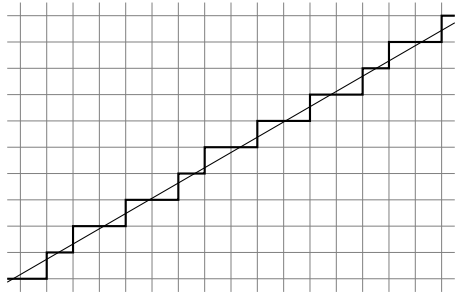
\begin{figure}
    \begin{center}
    \begin{tikzpicture}[scale=0.35]
      \def\xmin{0.5}
      \def\xmax{17.5}
      \def\ymin{-0.5}
      \def\ymax{10.5}
    
      \foreach \y in {0,1,...,10} {
        \draw[gray, very thin] (\xmin,\y) -- (\xmax,\y);
      }
      \foreach \x in {1,2,...,17} {
        \draw[gray, very thin] (\x,\ymin) -- (\x,\ymax);
      }

      \draw[black] (0.5,-0.13) -- (17.5,9.73);

      \draw[thick, black] (0.5,0) -- (2,0) -- (2,1) -- (3,1) -- (3,2) -- (5,2) -- (5,3) -- (7,3) -- (7,4) -- (8,4) -- (8,5) -- (10,5) -- (10,6) -- (12,6) -- (12, 7) -- (14,7) -- (14, 8) -- (15, 8) -- (15,9) -- (17,9) -- (17,10) -- (17.5,10);    
    \end{tikzpicture}
    \end{center}
    \caption{\label{fig:highway}A grid-path that tracks $\ell$.}
\end{figure}

Specifically, let $v\in V(\Grid)$ be a grid-point, and consider the $1 \times 1$ square $[v-0.5, v+0.5)\times [v-0.5, v+0.5)$. If $\ell$ intersects this square, then we include $v$ in $V(\ell)$. Whenever $u,v\in V(\ell)$ are adjacent grid points, $\Highway(\ell)$ contains the edge $\{u,v\}\in E(\Grid)$.
If $a\in [-1,1]$ we let $w_\ell$ assign every edge in $\Highway(\ell)$ the weight $\frac{\sqrt{a^2+1}}{|a|+1}$,
which is the asymptotic ratio between the Euclidean distance of points on $\ell$ and the number of edges taken along the grid path $\Highway(\ell)$.  Otherwise we write $\ell$
as $x = a^{-1}(y-b)$ with $a^{-1}\in [-1,1]$ and use weight
$\frac{\sqrt{a^{-2}+1}}{|a^{-1}|+1}$.  All off-highway edges have weight $\infty$.

This weight assignment guarantees a discrepancy of at most 1 between Euclidean distances and grid distances along $\Highway(\ell)$.
The proof of \cref{lem:highway-approx} appears in the appendix.

\begin{lemma}
\label{lem:highway-approx}
    Let $\Highway(\ell)$ denote the highway that approximates a line $\ell$ of the form $y = ax + b$. Then for any two points $u, v \in V(\Highway(\ell))$, 
    $\left| \dist_{w_\ell}(u, v) - \|u - v\|_2 \right| \le 1$.
\end{lemma}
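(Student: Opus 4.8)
\textbf{Proof plan for Lemma~\ref{lem:highway-approx}.}
The plan is to reduce the general statement to a clean counting argument about the grid path $\Highway(\ell)$, handling the case $a \in [-1,1]$ first and then observing the other case is symmetric under swapping coordinates. Without loss of generality I would assume $0 \le a \le 1$ (the sign of $a$ only reflects the picture, and $a=1$ is the extreme slope). Let $u,v \in V(\Highway(\ell))$ with $u$ to the left of $v$, and say the sub-path of $\Highway(\ell)$ from $u$ to $v$ consists of $h$ horizontal edges and $r$ vertical (``rising'') edges, so $\dist_{w_\ell}(u,v) = (h+r)\cdot \frac{\sqrt{a^2+1}}{a+1}$. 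If $u = (x_u, y_u)$ and $v = (x_v, y_v)$ then $h = x_v - x_u$ and $r = y_v - y_u$, and $\|u-v\|_2 = \sqrt{h^2 + r^2}$.

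The heart of the argument is to pin down $h$ and $r$ in terms of the line. First I would show that since both $u$ and $v$ have their defining unit squares intersected by $\ell$, the true horizontal displacement $h$ and the ``ideal'' vertical rise $a h$ (the amount $\ell$ climbs over a horizontal run of $h$) differ by at most $1$: concretely $|r - a h| \le 1$, because the vertical coordinate of $\ell$ at $x_u$ is within $1/2$ of $y_u$, and likewise at $x_v$ within $1/2$ of $y_v$ (this is exactly the reason $v \in V(\ell)$ forces $\ell$ to pass within $1/2$ of $v$ vertically when $|a|\le 1$). Then the quantity to bound is
\[
\left| (h+r)\cdot \frac{\sqrt{a^2+1}}{a+1} - \sqrt{h^2+r^2} \right|.
\]
I would compare both terms to the ``ideal'' point $(h, ah)$ on a line of slope $a$: the path length along such an idealized highway of horizontal extent $h$ would be exactly $(h + ah)\cdot\frac{\sqrt{a^2+1}}{a+1} = h\sqrt{a^2+1} = \sqrt{h^2 + (ah)^2}$, which is precisely the Euclidean length of the idealized segment. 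So the actual discrepancy comes only from the difference between $r$ and $ah$, which I have bounded by $1$; propagating this through both the weighted-length expression (where the coefficient $\frac{\sqrt{a^2+1}}{a+1} \le 1$ for $a\in[0,1]$ damps the error) and the Euclidean expression (where $\bigl|\sqrt{h^2+r^2} - \sqrt{h^2 + (ah)^2}\bigr| \le |r - ah| \le 1$ by $1$-Lipschitzness of $\sqrt{h^2 + (\cdot)^2}$... more carefully, $|\sqrt{A} - \sqrt{B}| \le |\sqrt A$-vs-$\sqrt B|$ argument via the mean value theorem on the relevant interval) yields a total discrepancy of at most $1$ after collecting the two contributions and checking the constant is not merely $2$ but actually $1$.

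The main obstacle I anticipate is precisely that last constant-chasing: a naive triangle-inequality split ``$|\text{weighted} - \text{ideal}| + |\text{ideal} - \text{Euclidean}|$'' gives something like $\frac{\sqrt{a^2+1}}{a+1} + 1 \le 1 + 1 = 2$, not $1$. To get the sharp bound one has to exploit that the two error terms point in correlated directions: when the path ``overshoots'' vertically ($r > ah$) the weighted length and the Euclidean length both increase together, so their difference stays small; when it undershoots they both decrease together. So the right move is to not split, but to bound $(h+r)\frac{\sqrt{a^2+1}}{a+1} - \sqrt{h^2+r^2}$ directly as a function of the single real variable $r$ (with $h$, $a$ fixed), show it is monotone or has a controlled extremum on the feasible interval $r \in [ah - 1, ah + 1]$, and evaluate at the endpoints — at $r = ah$ it is exactly $0$, and at the endpoints it is $O(1/h)$-ish but in any case bounded by $1$. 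I would also double-check the degenerate small cases ($h = 0$, or $u = v$, or paths of length $1$) separately since the asymptotic coefficient reasoning is cleanest for large $h$. Finally, the $|a| > 1$ case follows verbatim by transposing the roles of $x$ and $y$, which is why the lemma statement only needs to treat $y = ax+b$.
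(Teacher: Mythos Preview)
Your reduction to analyzing
\[
f(r)\;=\;(h+r)\,\frac{\sqrt{a^{2}+1}}{a+1}-\sqrt{h^{2}+r^{2}}
\]
as a function of $r$ with $h,a$ fixed is a reasonable starting point, but there is one small error and one genuine gap.

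The small error: the claim that $\ell(x_u)$ lies within $1/2$ of $y_u$ is not what the highway definition gives. The unit square centered at $u$ being hit by $\ell$ only guarantees that \emph{some} point of $\ell$ is within $1/2$ of $u$ in each coordinate; the value $\ell(x_u)$ itself can deviate by as much as $(1+|a|)/2$, so the correct bound is $|r-ah|\le 1+a$, not $1$. A concrete witness: on the highway for $\ell\colon y=0.99x+0.49$, both $(-1,-1)$ and $(0,1)$ are vertices, giving $h=1$, $r=2$, $|r-ah|=1.01$.

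The larger gap: your endpoint-evaluation strategy does not yield the constant $1$, even granting your tighter interval. At $a=0$, $h=1$, $r=-1$ one gets $|f(-1)|=|0-\sqrt{2}|=\sqrt{2}>1$; with the correct interval $r=ah-(1+a)$ things are worse still. You will object that these extreme $(h,r)$ pairs are not actually realized on a highway of the given slope, and that is true---but then your argument needs finer structural input than the slab $|r-ah|\le c$, and ``check the small cases separately'' becomes the whole proof rather than a clean-up step.

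The paper sidesteps all of this with a single algebraic move: multiply $f$ by its conjugate to obtain the identity
\[
|f|\;=\;\frac{2\,|ah-r|\,|ar-h|}{(a+1)\bigl[(h+r)\sqrt{a^{2}+1}+(a+1)\sqrt{h^{2}+r^{2}}\bigr]}.
\]
The point is the factorization of the numerator: it produces not only the expected factor $|ah-r|\le 1+a$ but also a second factor $|ar-h|$, and \emph{that} factor is what pairs with the denominator (both are of order $\max(h,r)$) to give $\le 1$ directly. This is precisely the ``the two errors are correlated'' phenomenon you correctly anticipated, but captured by an exact algebraic identity rather than a monotonicity argument on an interval.
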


The highway transformation can also be applied to a line \emph{segment} $s$. 
We use the same notation $\Highway(s)$.

\subsection{The Hierarchical Highway Construction}
\label{subsec:hierarchical-construction}

We are trying to find a weight assignment for 
the finite grid $[n]\times [n]$ in order to prove \cref{thm:highway2}.

\paragraph{Parameters.} The construction is parameterized by $(k_i)_{1\leq i\leq m}$, where
\[
k_1 = \lfloor n^{1/5}\rfloor, \quad k_{i+1} = \lfloor k_i^{1/2} \rfloor,
\]
and $m$ is minimum such that $k_m < 100$.
    
\paragraph{Layers of Lines.}
    The construction is based on a hierarchical system of lines in $\mathbb{R}^2$ which will eventually be embedded as highways in the grid.  The lines at level $i$
    have angles selected from $(\theta_{i,j})_{0\leq j < k_i}$:
\[
\theta_{i,j} = \frac{\pi \cdot j}{k_i}.
\]
    Fixing $i$ and one such angle $\theta_{i,j}$, there
    are many lines with angle $\theta_{i,j}$, spaced at distance $k_i^4$.  
    For $i\in [1,m],j\in [0,k_i-1], t\in \mathbb{Z}$, 
\[
\ell_{i,j,t} = \left\{ (x, y) \in \mathbb{R}^2 \ \middle|\  y \cdot \cos \theta = x\cdot \sin \theta + t \cdot k_i^4 \right\}.
\]
    Define $\Lines[i] = \{\ell_{i,j,t}\}$ to be the set of all lines at level $i$.

We cannot choose a weight function $W(n)$ that agrees with $w_{\ell_{i,j,t}}$ for \emph{every} line $\ell_{i,j,t}$ due to intersections.  Our solution is to avoid this issue by removing \underline{all} line intersections, which introduces distortions in distances that must be bounded.

Below we define a procedure to remove parts of $\ell_{i,j,t}$, leaving a set of line \emph{segments} $\mathcal{L}_{i,j,t}$.  Define $\Segments[i] = \bigcup \mathcal{L}_{i,j,t}$ to be the set of all line segments at level $i$.  If $O$ is an object or collection of objects,
define 
\[
\Fat(O,\delta) = \{p \mid \exists q\in O \mbox{ such that }
                            \|p-q\|_2 \leq \delta\}
\]
to be all points within distance $\delta$ of $O$.
Specifically, $\Fat(\ell,\delta)$ is a strip if $\ell$ is a line, and a hippodrome if $\ell$ is a segment.

Once $\Segments[1],\ldots,\Segments[i-1]$ are constructed, we construct $\Segments[i]$ as follows. 
For each $\ell_{i,j,t}\in \Lines[i]$
        initialize 
        $\ell \gets \ell_{i,j,t}$ and proceed to remove parts of $\ell$ in Steps 1 and 2.
        
\begin{description}
    \item[Step 1.]
        Set
        $\ell \gets \ell - \Fat(\Lines[i]-\{\ell\}, k_i)$,
        i.e., we remove every part of $\ell$ within distance $k_i$ of any other line at level $i$.

    \item[Step 2.]
        For each segment $s\in \bigcup_{i'<i}\Segments[i']$ 
        such that $\Fat(s,k_i)\cap \ell \neq \emptyset$,
        let $\overline{AB} = \Fat(s,k_i)\cap \ell$. 
        If $\|A-B\|_2 \geq k_i$, set $\ell \gets \ell - \overline{AB}$.
        Otherwise, let $B'\in \ell_{i,j,t}$ be such that $\|A-B'\|=k_i$ and $B\in \overline{AB'}$,
        and set $\ell \gets \ell - \overline{AB'}$.
        See \cref{fig:highway-construction}.
        Define $\mathcal{L}_{i,j,t} = \ell$, and include 
        all line segments of $\mathcal{L}_{i,j,t}$ in $\Segments[i]$.
\end{description}

\begin{figure}
    \centering
    \begin{tabular}{c}
        \includegraphics[width=0.7\linewidth]{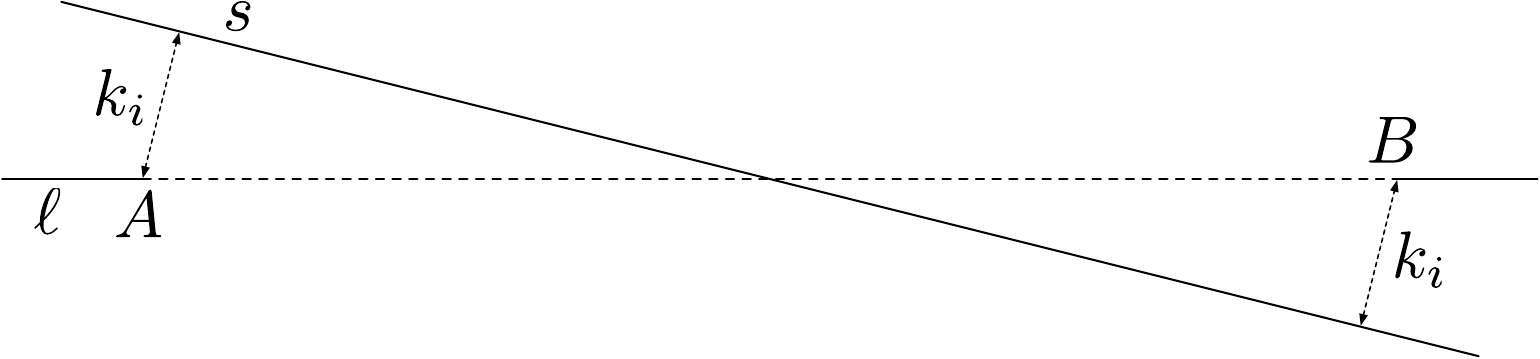}\\
        \textbf{(A)}\\
    \includegraphics[width=0.7\linewidth]{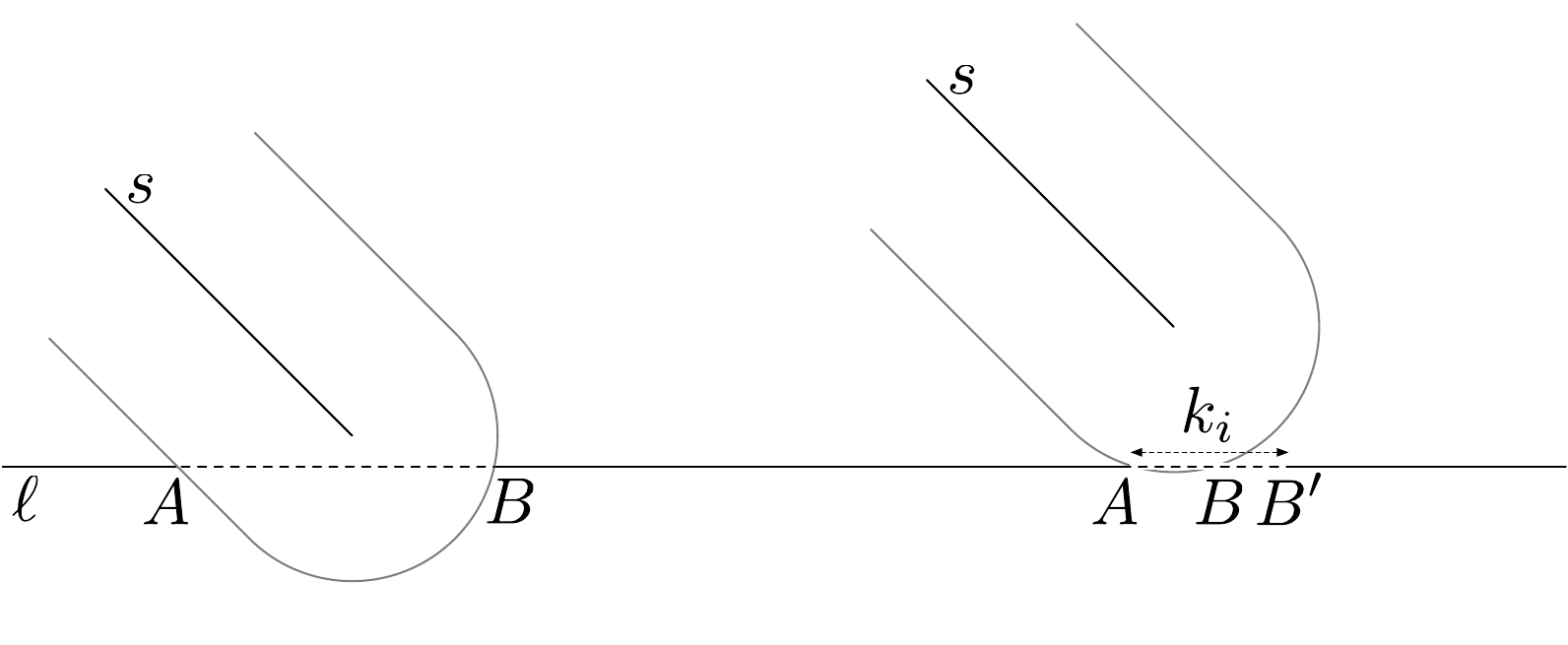}\\
         \textbf{(B)}
    \end{tabular}

    \caption{Illustrations of various cases in Step 2.  \textbf{(A)} $\ell$ and $s$ intersect. The segment $\ell\cap \Fat(s,k_i)$ is removed from $\ell$. \textbf{(B)} $\ell$ and $s$ do not intersect.  Left: $\overline{AB} \geq k_i$ and $\overline{AB}$ is removed from $\ell$.  Right: $\overline{AB} < k_i$ and $B'$ is such that $\overline{AB'}=k_i$ is removed from $\ell$.}
    \label{fig:highway-construction}
\end{figure}

The weight assignment $W(n)$ is now constructed as follows.  
For each line segment $s\in \bigcup_i \Segments[i]$, 
let $W(n)$ agree with $w_s$ at all edges in the corresponding highway segment $\Highway(s)$.  
All edges not appearing in any line segment have weight 2.

\begin{lemma}
\label{lem:highway-separation}
Fix any $s\in \Segments[i]$, $s'\in \Segments[i']$,
where $s\neq s'$ and $i\geq i'$.
For any points $u\in s,u'\in s'$, 
$\|u-v\|_2 \geq k_i$.  
\end{lemma}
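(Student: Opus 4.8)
The plan is to trace the two-step removal procedure and show that whenever $u\in s$ and $u'\in s'$ survive on distinct final segments, the fattening operation responsible for separating $s$ from $s'$ already guarantees distance at least $k_i$. The key preliminary observation is that \emph{every interval removed from a level-$i$ line has length at least $k_i$}: in Step 1, $\ell\cap\Fat(\ell',k_i)$ is either empty (when $\ell'$ is parallel to $\ell$, since parallel level-$i$ lines are spaced $k_i^4>k_i$ apart) or a chord crossing the width-$2k_i$ strip $\Fat(\ell',k_i)$, hence of length $\ge 2k_i$; in Step 2, a removed piece either already has length $\ge k_i$ or is extended to have length exactly $k_i$. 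Two further facts are used throughout: removals are permanent (later operations only delete more), and $s$ (resp.\ $s'$) contains no removed point, being a connected component of the final set $\mathcal{L}_{i,j,t}$.

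First I would handle the case $i=i'$. If $s$ and $s'$ lie on distinct level-$i$ lines $\ell\supseteq s$ and $\ell'\supseteq s'$, then Step 1 removes $\ell\cap\Fat(\ell',k_i)$ from the line carrying $s$, and this is never restored, so every $u\in s$ satisfies $\dist(u,\ell')>k_i$; since $u'\in\ell'$ this gives $\|u-u'\|_2>k_i$. If instead $s$ and $s'$ lie on the same line $\ell_{i,j,t}$ — i.e.\ they are two distinct connected components of $\mathcal{L}_{i,j,t}$ — let $G$ be the maximal removed gap of $\ell_{i,j,t}$ between them. Any interior point of $G$ was deleted by some removal operation excising an interval $J$ with $|J|\ge k_i$; since $J$ is an interval and the points of $s$ and $s'$ flanking $G$ survive, $J$ cannot extend past either endpoint of $G$, so $J\subseteq\overline{G}$ and hence $|G|\ge k_i$. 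As $u,u'$ lie on $\ell_{i,j,t}$ on opposite sides of $G$, $\|u-u'\|_2\ge|G|\ge k_i$.

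For the case $i>i'$, note $s'\in\Segments[i']\subseteq\bigcup_{i''<i}\Segments[i'']$, so when $\Segments[i]$ is built, Step 2 processes $s'$ against the line $\ell$ currently carrying $s$: if $\Fat(s',k_i)\cap\ell\neq\emptyset$ we remove $\Fat(s',k_i)\cap\ell$ (or a superset of it, in the extension case), and otherwise there is nothing to do. Either way, after this step — and hence permanently — the line carrying $s$ is disjoint from $\Fat(s',k_i)$, so every $u\in s$ lies outside $\Fat(s',k_i)$, which means $\|u-u'\|_2>k_i$ for all $u'\in s'$. In all cases $\|u-u'\|_2\ge k_i$.

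I expect the main obstacle to be the same-line, same-level sub-case: it is the only place where one must argue about the \emph{union} of removed intervals rather than a single fattening, and it relies on the two facts isolated at the outset — that each individual removal has length $\ge k_i$ and that removals are never undone — to conclude that a gap separating two surviving components is itself at least $k_i$ long. A minor bookkeeping point is to record that $\Fat(s',k_i)$ is convex (a Minkowski sum of a segment and a disk), so that $\Fat(s',k_i)\cap\ell$ is a single interval $\overline{AB}$ and Step 2 is well-defined.
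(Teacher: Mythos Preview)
Your proof is correct and follows the same approach as the paper's: trace Steps 1 and 2 of the construction and observe that every surviving point of $\mathcal{L}_{i,j,t}$ lies outside $\Fat\bigl(\bigcup_{i'<i}\Segments[i']\cup(\Lines[i]\setminus\{\ell_{i,j,t}\}),\,k_i\bigr)$, and that consecutive segments on the same line are at least $k_i$ apart. The paper's proof is a two-sentence assertion of exactly these facts, whereas you spell out the case analysis (different level, same level/different line, same line) and isolate the useful auxiliary fact that every individual removed interval has length $\ge k_i$; this extra care is helpful for the same-line sub-case, which the paper leaves implicit.
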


\begin{proof}

Steps 1 and 2 ensure that all remaining points on $\mathcal{L}_{i,j,t}$ lie outside 
\[
\Fat\left(\bigcup_{i'<i} \Segments[i'] \cup (\Lines[i] - \{\ell_{i,j,t}\}), k_i\right),
\]
and that any two consecutive segments on $\mathcal{L}_{i,j,t}$ are separated by distance at least $k_i$. This implies that the Euclidean distance between any $u \in s$ and $u' \in s'$ is at least $k_i$.
\end{proof}

\cref{lem:highway-lower-upper-bound} shows that distances under $W(n)$ are approximately Euclidean, up to a multiplicative stretch of $1+O(k_i^{-1})$ and additive stretch $O(k_i^4)$, for every index $i$.

\begin{lemma}
\label{lem:highway-lower-upper-bound}
For any $u, v\in [n]^2$ and $i\in [m]$.
\[
\|u - v\|_2 - 1 \le \dist_{W(n)}(u, v) \le \|u - v\|_2 + 
O\left(
k_i^{-1}\|u-v\|_2 + k_i^4
\right).
\]
\end{lemma}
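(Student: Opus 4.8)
The lower bound $\|u-v\|_2 - 1 \le \dist_{W(n)}(u,v)$ is the easy direction: every edge has weight at least $1/\sqrt 2 \cdot (1) \ge$ some constant, but more to the point, by Lemma \ref{lem:highway-approx} the weights were chosen so that walking along any highway segment covers Euclidean ground at a rate of at most $1$ unit of weight per unit of Euclidean length (the weight $\frac{\sqrt{a^2+1}}{|a|+1}\ge \frac{1}{\sqrt 2}$, and each grid step advances Euclidean distance at most $1$, so weight accrues at least as fast as Euclidean displacement up to an additive slack of $1$). Off-highway edges have weight $2 \ge 1$, which only helps. So any path from $u$ to $v$ in $\Grid[W(n)]$ has weight at least $\|u-v\|_2$ minus a constant accounting for the at-most-$1$ discrepancy at the two ends; this gives the claimed lower bound. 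I would state this in one or two sentences.

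The upper bound is the substance. Fix $i$. The plan is to build an explicit $u$-to-$v$ path that uses level-$i$ highways almost exclusively. First, observe that the level-$i$ lines of a single angle $\theta_{i,j}$ are spaced $k_i^4$ apart, and the angles $\theta_{i,j}=\pi j/k_i$ form a $(\pi/k_i)$-net of directions. Pick the angle $\theta_{i,j^\*}$ closest to the direction of $\overline{uv}$; then the line $\ell_{i,j^\*,t}$ whose strip contains (say) $u$, or more carefully a line of that family passing within $O(k_i^4)$ of both $u$ and $v$ along a near-parallel route — here one rides essentially one single line $\ell$ of that family, since two points at distance $D=\|u-v\|_2$ in a direction within $\pi/k_i$ of $\theta_{i,j^\*}$ have perpendicular displacement $O(D/k_i)$ from that direction, which may exceed $k_i^4$, so in general one must "staircase" between $O(D/(k_i\cdot k_i^4)) = O(D/k_i^5)$ consecutive parallel level-$i$ lines, transferring between them using short stretches of off-highway edges or lower-level highways, each transfer costing $O(k_i^4)$ in additive error. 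Actually the cleaner route: ride a single line $\ell_{i,j^\*,t}$ from near $u$ to near $v$; the endpoints of the ride are within distance $O(k_i^4 + (\text{gap from deletions}))$ of $u$ and $v$ respectively, and we connect those short gaps by off-highway weight-$2$ edges at cost $O(k_i^4)$.

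The two genuine obstacles are: (a) the highway $\ell_{i,j^\*,t}$ is not a full line but a union of segments $\mathcal{L}_{i,j^\*,t}$, because Steps 1 and 2 deleted the parts near other level-$i$ lines and near all lower-level segments; (b) getting from one surviving segment to the next. For (a): each deletion in Step 2 removes a sub-segment of length at most (diameter of $\Fat(s,k_i)\cap\ell$ extended to $k_i$) — but a lower-level segment $s\in\Segments[i']$ with $i'<i$ has bounded "width" as seen by $\ell$, and crucially there are few such segments crossing any bounded window of $\ell$ because (by Lemma \ref{lem:highway-separation}, applied at level $i'$) the level-$i'$ segments are themselves $k_{i'}$-separated; deletions from same-level lines are similarly controlled. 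So over a length-$D$ stretch of $\ell$, the total deleted length and the number of gaps are both $O(D/k_i^{?})$, each gap of size $O(k_i^4)$. For (b): to cross a gap, either use an off-highway detour of length $O(k_i^4)$ at weight $2$ per edge (cost $O(k_i^4)$), contributing $O((\#\text{gaps})\cdot k_i^4)$ to the additive error; summing, this should come out to $O(k_i^{-1}D + k_i^4)$ after balancing — the multiplicative $k_i^{-1}D$ term comes from the angular discretization (riding direction $\theta_{i,j^\*}$ instead of exactly the direction of $\overline{uv}$ inflates length by $1+O(1/k_i^2)$, but the staircasing between parallel lines, or equivalently the cosine loss, is the $O(k_i^{-1})$ factor), and the additive $k_i^4$ absorbs the endpoint connections and a bounded number of gap-crossings. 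The main technical work — and the step I expect to be the real obstacle — is bounding the number and total length of deleted gaps on the ridden line: one must argue that a window of $\ell$ of Euclidean length $\Theta(D)$ meets $\Fat(s,k_i)$ for only $O(D/k_i^4)$ segments $s$ of each lower level and that same-level interference is sparse, so that the per-window additive blow-up is $O(k_i^4)\cdot O(D/k_i^4)=O(D)$ — which is too weak unless one is careful, so the right accounting is that gap-crossings that are "forced" are rare, and most of $\ell$ survives, with the deleted fraction being $O(1/k_i)$ of the length; I would prove this by a charging argument against the $k_{i'}$-separation of lower-level segments and the $k_i$-spacing logic of Step 1, then pick the bound so that after summing the geometric contributions over the at most $O(D/k_i^4)$ windows, the additive term is $O(k_i^4)$ (for the $O(1)$ boundary windows) and the rest folds into the multiplicative $O(k_i^{-1}\|u-v\|_2)$ term.
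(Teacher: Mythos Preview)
Your overall strategy matches the paper's --- ride a single level-$i$ line $\ell^*$ whose direction is within $\pi/k_i$ of $\overline{uv}$, connect $u,v$ to points $A,B$ on $\Highway(\ell^*)$ at cost $O(k_i^4 + k_i^{-1}\|u-v\|_2)$, then walk $A\to B$ handling gaps --- but the gap-crossing step has a genuine hole. You propose crossing every gap with weight-$2$ off-highway edges and assert each gap has size $O(k_i^4)$. That bound is false for Step-2 gaps: such a gap is $\Fat(s,k_i)\cap\ell^*$ for some $s\in\Segments[i']$, $i'<i$, and if $s$ is nearly parallel to $\ell^*$ (e.g.\ both horizontal, which occurs since $\theta_{i,0}=\theta_{i',0}=0$) the gap can be as long as $|s|$ itself, potentially a constant fraction of $n$. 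Crossing it with weight-$2$ edges then incurs additive loss proportional to the gap length, and no charging against $k_{i'}$-separation repairs this. The paper's fix is to \emph{detour along $\Highway(s)$}: if $E,F$ are the points of $s$ nearest the gap endpoints $C,D$, then $\|C-E\|,\|D-F\|=O(k_i)$ and $\|E-F\|\le\|C-D\|$, so by \cref{lem:highway-approx} the walk $C\to E\to F\to D$ costs only $O(k_i)$ more than $\|C-D\|$, \emph{regardless of how long the gap is}. Then \cref{lem:highway-separation} gives spacing $\ge k_{i-1}\ge k_i^2$ between the interfering lower-level segments, hence $O(\|A-B\|/k_i^2)$ detours totaling $O(k_i^{-1}\|A-B\|)$. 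Step-1 gaps are handled separately and differently: those really are short (each $O(k_i^2)$, with $k_i$ directions and $O(\|A-B\|/k_i^4)$ crossings per direction, total $O(k_i^{-1}\|A-B\|)$) and may indeed be crossed directly on weight-$2$ edges.

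A smaller issue: your lower-bound sketch does not explain why the loss is $-1$ rather than $-k$ for a path touching $k$ highway segments. The paper decomposes any path as $B_0A_1B_1\cdots A_kB_k$ with the $A_j$ on highways and the $B_j$ off; each $A_j$ loses at most $1$ relative to Euclidean by \cref{lem:highway-approx}, but each intermediate $B_j$ ($1\le j\le k-1$) contains at least one weight-$2$ edge and hence gains at least $1$, so the $\pm 1$'s cancel and only a single $-1$ survives.
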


\begin{proof}
Observe that the highways corresponding to all segments in $\bigcup_i \Segments[i]$ are vertex-disjoint.  Thus, every grid path $P$ from $u$ to $v$ 
can be written as $B_0A_1B_1A_2B_2\cdots A_kB_k$, where each $A_j \subset \Highway(s)$ for some $s\in \bigcup_i \Segments[i]$ and each $B_j$ is disjoint from all highway segments,
and therefore consists of only weight-2 edges. 
(One or both of $B_0,B_k$ may be empty.)
By \cref{lem:highway-approx}, $W(A_j)$ is at least the Euclidean distance between its endpoints minus one, and $W(B_j)$ is at least the Euclidean distance multiplied by 2,  which implies that $W(B_j)$ is at least the Euclidean distance plus 1 for $1\le j \le k-1$, so $W(P)$ is at least $\|u-v\|_2-1$.
\medskip 

Turning to the upper bound, We bound the distance $\dist_{W(n)}(u, v)$ by explicitly constructing a path that stays within the vicinity of a single line $\ell_{i,j,t}$.
Since level-$i$ lines occur at angular intervals of $\frac{\pi}{k_i}$ and parallel lines are spaced $k_i^4$ apart, 
we can always find an $\ell^* = \ell_{i,j,t}$ satisfying
the following properties.
First, the difference in angle between $\ell^*$ and $\overline{uv}$
is at most $\frac{\pi}{2k_i}$. 
Second, the distance from $u$ to $\ell^*$ is at most $k_i^4/2$.
Let $A$ and $B$ be the closest grid points on $\Highway(\ell^*)$
from $u$ and $v$ respectively.  It follows that
\begin{equation}\label{eqn:u-A-B-v}
\dist_{W(n)}(u,A) + 
\dist_{W(n)}(v,B) = 
O\left(k_i^4 + \|u,v\|_2\sin\left(\frac{\pi}{2k_i}\right)\right)
= O(k_i^4 + k_i^{-1}\|u-v\|_2).
\end{equation}

It remains to bound $\dist_{W(n)}(A,B)$.
A trivial upper bound is $\dist_{W(n)}(A,B) \leq 2 \cdot \sqrt{2}\|A-B\|_2$, so if $\|A-B\|_2 < 100 k_i^4$ we are done.
Henceforth we shall assume that
$\|A-B\|_2 \geq 100 k_i^4$.
We would prefer to follow the $A$-$B$ path along $\Highway(\ell^*)$,
but sections of this highway have effectively been removed by 
Steps 1 and 2 of the construction.  We bound the stretch induced 
by the gaps in the highway introduced in Steps 1 and 2 separately.

\paragraph{Step 1 Stretch.}  Fix a direction $\theta_{i,j'}$ different from $\ell^*$'s direction $\theta_{i,j}$.  Whenever $\ell^*$ intersects a line with angle $\theta_{i,j'}$, Step 1 causes
$\Highway(\ell^*)$ to lose $\frac{k_i}{\sin|\theta_{i,j}-\theta_{i,j'}|} = O(k_i^2)$ edges.  There are $k_i$ angles, and 
parallel lines with angle $\theta_{i,j'}$
are spaced $k_i^4$ apart, so the total number of edges removed
from the $A$-$B$ path in $\Highway(\ell^*)$ in Step 1 is
\begin{equation}\label{eqn:step1-stretch}
O\left(%
k_i^2\cdot k_i \cdot \frac{\|A-B\|_2}{k_i^4}
\right) = O(k_i^{-1} \|A-B\|_2).
\end{equation}
The additive stretch induced by walking across the gaps 
induced by Step 1 is also $O(k_i^{-1}\|A-B\|_2)$ as all 
these edges have weight 2.

\paragraph{Step 2 Stretch.}  Whenever part of $\Highway(\ell^*)$
is removed by Step 2 we do \emph{not} walk precisely in the direction of $\ell^*$ but take a \emph{detour} to a lower level highway.  Suppose that in Step 2, 
a segment $s\in \bigcup_{i'<i} \Segments[i']$ causes
an interval $\overline{CD}$ of $\ell^*$ to be removed.  
Define $E,F$ to be the points on $s$ closest to $C,D$, respectively.
When our path reaches $C$, we walk from $C$ to $E$, then to $F$ along $\Highway(s)$, then to $D$.  See \cref{fig:CDEF}.
By construction
$\|C-E\|_2, \|D-F\|_2 = O(k_i)$.
Since $\|E-F\|_2 \leq \|C-D\|_2$ and by \cref{lem:highway-approx}
$\dist_{W(n)}(E,F)=\|E-F\|_2\pm 1$,
the additive stretch due to the conflict with $s$ is at most
\[
\dist_{W(n)}(C,E)+\dist_{W(n)}(E,F)+\dist_{W(n)}(F,D) - \dist_{w_{\ell^*}}(C,D) 
= O(k_i).
\]

The last task is to bound the number of such segments interfering with the $A$-$B$ path.  Observe that $s$ is a segment 
of a line at level $i-1$ or lower.  Thus, by \cref{lem:highway-separation} any two such segments $s,s'$ are at distance at least $k_{i-1} \ge k_i^2$, and the
total additive stretch caused by Step 2 detours is
\begin{equation}\label{eqn:step2-stretch}
O\left(%
k_i \cdot \frac{\|A-B\|_2}{k_i^2}
\right)
=
O(k_i^{-1}\|A-B\|_2).
\end{equation}

\begin{figure}
    \centering
    \begin{tabular}{c}
    \includegraphics[width=0.7\linewidth]{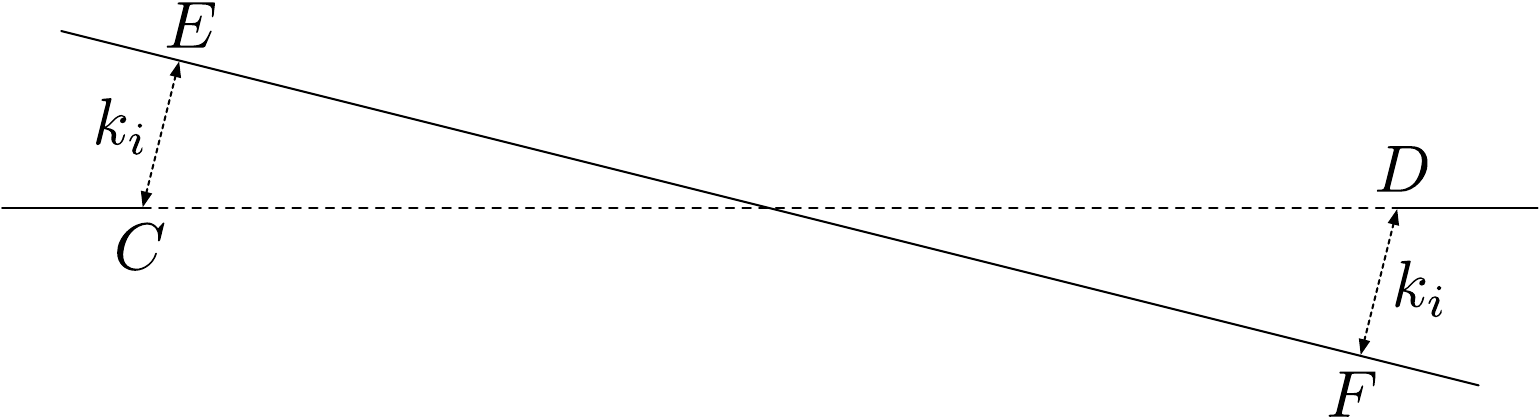}\\
\textbf{(A)}\\
    \includegraphics[width=0.7\linewidth]{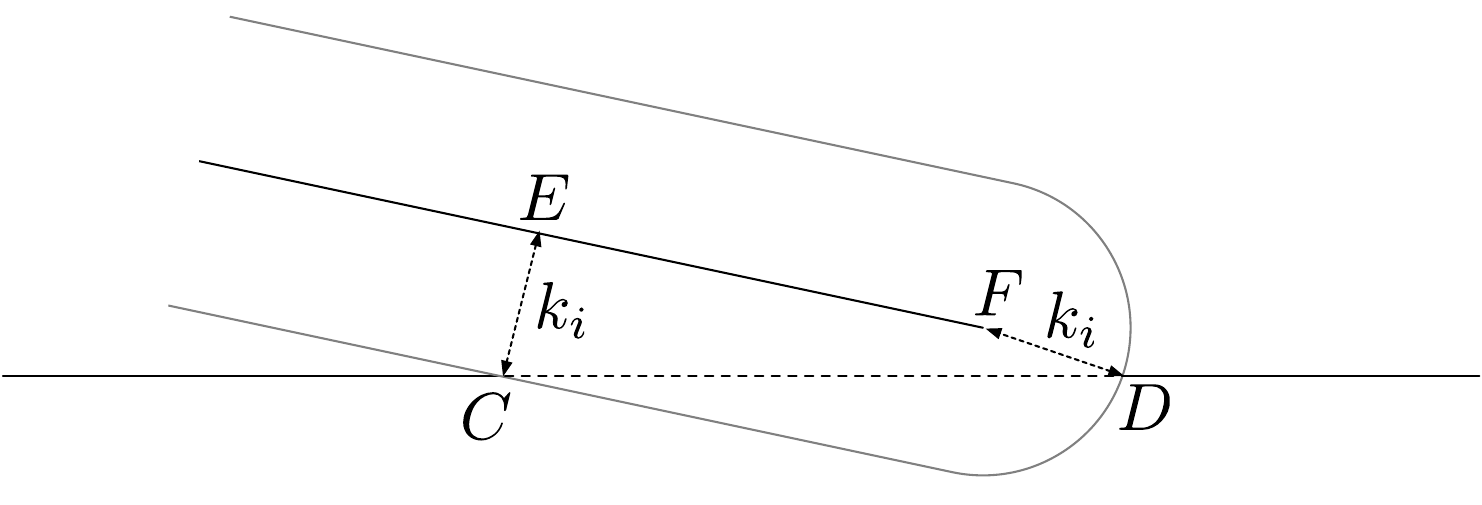}\\
\textbf{(B)}\\
    \end{tabular}
    \caption{Step 2 detours from the proof of \cref{lem:highway-lower-upper-bound}. \textbf{(A)} The case when segment $s$ intersects $\ell^*$.
    \textbf{(B)} When segment $s$ does not intersect  $\ell^*$, but $\Fat(s,k_i)$ does.}
    \label{fig:CDEF}
\end{figure}

Combining \cref{eqn:u-A-B-v,eqn:step1-stretch,eqn:step2-stretch}, we conclude that
\begin{align*}
    \dist_{W(n)}(u,v) 
    &\leq \dist_{W(n)}(u,A) + \dist_{W(n)}(A,B) + \dist_{W(n)}(B,v)\\
    &\leq \|u-v\| + O\left(k^{-1}\|A-B\|_2 + k_i^{-1}\|u-v\|_2 + k_i^4\right)\\
    &= \|u-v\|_2 + O\left(k^{-1}\|u-v\|_2 + k_k^4\right).
\end{align*}
\end{proof}

\begin{proofofthm}{\ref{thm:highway2}} Recall that by the definition of the sequence $(k_i)$, for any pair of points $u, v$ with Euclidean distance $d = \|u - v\|_2 > 100^9$, there exists some index $1 \le i \le m$ such that
\[
k_i \in \left[d^{1/9} - 1,\ d^{2/9}\right].
\]
Applying~\cref{lem:highway-lower-upper-bound}, we have
\begin{align*}
\dist_W(u, v) &\le \|u - v\|_2 + O\left( \frac{1}{k_i} \cdot \|u - v\|_2 + k_i^4 \right) \\
&\le \|u - v\|_2 + O\left( \|u - v\|_2^{8/9} \right).
\end{align*}
The theorem holds trivially when
$d \le 100^9$, which completes 
the proof of~\cref{thm:highway2}.
\end{proofofthm}

To prove \cref{thm:highway1} we give a ``black box'' reduction showing that
any construction that gives a bound
like \cref{thm:highway2} for the finite grid $[n]\times [n]$ yields the same guarantee on the infinite grid $\mathbb{Z}\times\mathbb{Z}$.

We begin by tiling the integer grid with various size squares as follows.  The central tile is $1000\times 1000$, which is surrounded by eight $1000\times 1000$ tiles, all of which, in turn, are surrounded by eight $3000\times 3000$ tiles, 
which are in turn surrounded by eight $9000\times 9000$ tiles, and so on.  See \cref{fig:highway2}.
Within each of these $n\times n$ tiles, we apply \cref{thm:highway2} to choose the weight function
in the central $(n-2)\times (n-2)$ grid.
All edges with at least one endpoint on the boundary of the tile have weight 2.  Let $W$ be the resulting weight function of $E(\Grid)$.

\begin{figure}
    \centerline{\includegraphics[scale=.15]{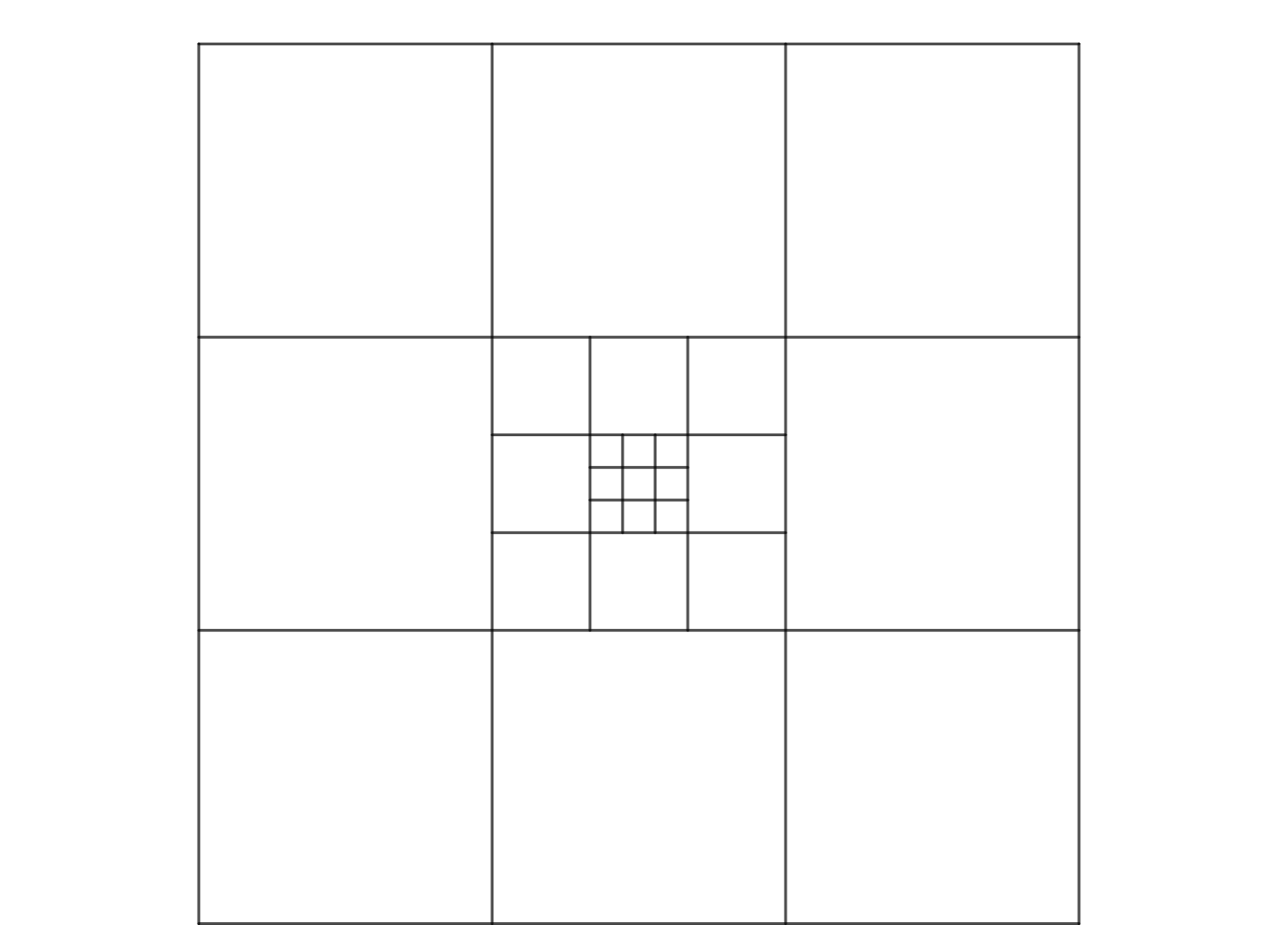}}
    \caption{\label{fig:highway2}Illustration of the recursive tiling: at each level, we place eight squares around the current one to expand the scale.}
\end{figure}

\begin{proofofthm}{\ref{thm:highway1}}
Consider any two $u,v\in V(\Grid)$ and let 
$L\geq 0$ be minimum such that 
$\|u-v\|_2 \leq 3^L 1000$.  The line $\overline{uv}$ can intersect at most 3 tiles with dimensions $3^L 1000$ or larger, and at most 4 tiles with dimension $3^i 1000$, $i<L$.  Thus, by concatenating shortest paths inside each tile, by \cref{thm:highway2},
the total additive stretch is at most
\begin{align*}
\dist_W(u,v) - \|u-v\|_2 
= 
O\left(\|u-v\|_2^{8/9} + \sum_{i=0}^{L-1}(\sqrt{2}\cdot 3^i 1000)^{8/9}\right) 
= 
O(\|u-v\|_2^{8/9}).
\end{align*}
The same argument from \cref{lem:highway-approx} shows that $\dist_W(u,v) \geq \|u-v\|_2 -1$.
\end{proofofthm}

\section{Experimental Findings in the Squishy Grid}\label{sect:experimental-findings}

If \cref{q:randomized-squishy-grid} seems too daunting, a natural idea is to simplify the problem by considering only \emph{monotone} paths, that is, paths that use the fewest number of edges.  

\begin{question}\label{q:monotone-squishy-grid}
Define $\monodist_w(u,v)$ to be the length of the shortest $u$-$v$ 
path that uses $\|u-v\|_1$ edges.   Does there exist a distribution $\mathscr{D}^*$ over $\R_{\geq 0}$
such that if $\Grid[w] \sim \Grid[\mathscr{D}^*]$ 
is a randomly weighted graph, for all $u,v\in V(\Grid)$,
\[
\E(\monodist_w(u,v)) = (1\pm o(1))\|u-v\|_2.
\]
\end{question}

At first glance this problem may seem easier, or more plausible, than \cref{q:randomized-squishy-grid}.  
Whereas it is an open problem finding a distribution $\mathscr{D}$ satisfying \cref{eqn:0-45} (the time constant in the $0^{\circ}$ and $45^{\circ}$ directions are 1), this is nearly trivial when we consider $\monodist$.
\begin{lemma}\label{lem:0-45-monodist}
There exists a distribution $\mathscr{D}$ on $\R_{\geq 0}$ such that
\[
\lim_{n\to \infty} 
\frac{\E(\monodist_w(\mathbf{0}, ne_0))}{n} = \lim_{n\to \infty} \frac{\E(\monodist_w(\mathbf{0}, ne_{45}))}{n}=1.
\]
\end{lemma}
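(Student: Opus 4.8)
The plan is to produce a one‑parameter family of mean‑one distributions and run an intermediate‑value argument on the $45^\circ$ monotone time constant. For $t\in[0,1)$ let $\mathscr{D}_t$ put mass $\tfrac12$ on each of $1-t$ and $1+t$; then $\E(w_0\sim\mathscr{D}_t)=1$, so the $0^\circ$ constant is automatically $1$: the unique monotone path from $\mathbf 0$ to $ne_0$ is the straight one, of length a sum of $n$ i.i.d.\ copies of $\mathscr{D}_t$, so $\E(\monodist_w(\mathbf 0,ne_0))=n$. It remains to tune $t$ so that the $45^\circ$ monotone constant equals $1$. Writing $a_m=\monodist_w(\mathbf 0,(m,m))$, translation invariance plus subadditivity under concatenation of monotone paths makes $\E(a_m)$ subadditive, so $\mu(t):=\lim_m \E(a_m)/(m\sqrt2)$ exists by Fekete's lemma, and one checks $\lim_n \E(\monodist_w(\mathbf 0,ne_{45}))/n=\mu(t)$ after absorbing the $O(1)$ discrepancy between $ne_{45}$ and the nearest diagonal point $(m_n,m_n)$.

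Two easy endpoints bracket the target. At $t=0$ all weights equal $1$, so $a_m=2m$ and $\mu(0)=\sqrt2>1$. At the other end I claim $\mu(t)\le\sqrt2\,(1-t/2)$, which is $<1$ once $t>2-\sqrt2$ (so, e.g., $\mu(0.7)<1$). To prove the claim I would construct a monotone $\mathbf 0$-to-$(m,m)$ path greedily: while both coordinates are below $m$, step along the lighter of the two not‑yet‑revealed forward edges, breaking ties with a fair coin; once a coordinate reaches $m$, walk straight to $(m,m)$. Because the path is monotone, the two forward edges examined at each vertex are fresh and independent of the history, so each ``choice'' edge has conditional expected weight $\E(\min(X,Y))=\tfrac34(1-t)+\tfrac14(1+t)=1-t/2$, and by Wald the expected weight of the choice portion is at most $2m(1-t/2)$. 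The length $S_m$ of the final straight stretch equals $|x_k-y_k|$ at the stopping time when $\max(x_k,y_k)$ first hits $m$; since $x_k-y_k$ is a mean‑zero $\pm1$ walk, $\E(S_m)\le\E(\max_{k\le 2m}|x_k-y_k|)=O(\sqrt m)=o(m)$, and the tail edges are fresh with mean $1$. Hence $\E(a_m)\le 2m(1-t/2)+o(m)$ and $\mu(t)\le\sqrt2(1-t/2)$.

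Finally I would show $t\mapsto\mu(t)$ is Lipschitz (hence continuous) by coupling: realize $w_t$ and $w_{t'}$ from a common family of uniform random variables, one per edge, so that on each edge the two weights land on the same side ($1\pm$) and differ by exactly $|t-t'|$. Any monotone $u$-$v$ path uses $\|u-v\|_1$ edges, so its $w_t$- and $w_{t'}$-lengths differ by at most $\|u-v\|_1\,|t-t'|$, giving $|\monodist_{w_t}(u,v)-\monodist_{w_{t'}}(u,v)|\le\|u-v\|_1\,|t-t'|$; taking $u=\mathbf 0$, $v=(m,m)$, dividing by $m\sqrt2$ and letting $m\to\infty$ yields $|\mu(t)-\mu(t')|\le\sqrt2\,|t-t'|$. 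Since $\mu(0)=\sqrt2>1>\mu(0.7)$ and $\mu$ is continuous, there is $t^\ast\in(0,0.7)$ with $\mu(t^\ast)=1$; then $\mathscr{D}_{t^\ast}$ makes both the $0^\circ$ and $45^\circ$ monotone time constants equal to $1$.

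The main obstacle is the upper bound $\mu(t)\le\sqrt2(1-t/2)$: making the greedy exploration rigorous, in particular verifying that the edges revealed are fresh and independent of the past (so Wald applies to both the choice portion and the straight tail) and controlling $\E(S_m)$ via the random‑walk estimate. The subadditivity and existence of the limit, the reduction from $ne_{45}$ to $(m,m)$, and the Lipschitz continuity are all routine.
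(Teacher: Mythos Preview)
Your proposal is correct and follows essentially the same approach as the paper: the same one-parameter family (mass $\tfrac12$ on each of $1\pm t$), the same greedy ``take the lighter forward edge'' strategy to upper-bound the $45^\circ$ constant, and the intermediate value theorem to pin down $t^*$. You are in fact more careful than the paper---you establish existence of the limit via subadditivity, prove Lipschitz continuity by coupling, and work at an interior point $t=0.7$ with the general bound $\mu(t)\le\sqrt2(1-t/2)$---whereas the paper only computes the endpoint $t=1$ (getting the same $\E[\min(X,Y)]=1/2$ and $O(\sqrt n)$ tail) and then invokes the IVT without further justification.
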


\begin{proof}
    Since there is only one path from $\mathbf{0}$ to $ne_0$, 
    any distribution $\mathscr{D}$ with $\E(w_0\sim \mathscr{D}) = 1$ works for the $0^\circ$ direction.  Consider the class of distributions $\mathscr{D}[\epsilon]$, where 
    $\Pr(w_0 = 1-\epsilon)=\Pr(w_0 = 1+\epsilon)=1/2$.  
    When $\Grid[w]\sim\Grid[\mathscr{D}[0]]$, 
    $\monodist_w(\mathbf{0},ne_{45}) = \sqrt{2}\cdot \|u-v\|_2$.
    We argue that when $\Grid[w]\sim\Grid[\mathscr{D}[1]]$,
    $\E[\monodist_w(\mathbf{0},ne_{45})] < \|u-v\|_2/\sqrt{2} + O(\sqrt{n})$.
    When $\epsilon=1$ all weights are 0 or 2 with equal probability.
    We walk myopically from the origin, taking a weight-0 edge North or East whenever possible, or a weight-2 edge North or East if necessary, until we reach a barrier when the $x$- or $y$-coordinate matches $ne_{45}$.  When the edges in both directions have the same weight, we choose one randomly. Before reaching a barrier, the expected weight of the next edge is $(3/4)\cdot 0 + (1/4)\cdot 2 = 1/2$ and after reaching a barrier it is $(1/2)(0+2)=1$.  There are $O(\sqrt{n})$ edges in the latter category, in expectation,
    so $\E(\monodist_w(\mathbf{0},ne_{45})) < (1/2)\|ne_{45}\|_1 + O(\sqrt{n})=n/\sqrt{2} + O(\sqrt{n})$.  
    By the intermediate value theorem, there has to be some $\epsilon^* \in [0,1]$
    such that
    \[
    \lim_{n\to \infty} 
\frac{\E(\monodist_w(\mathbf{0}, ne_0))}{n} = \lim_{n\to \infty} \frac{\E(\monodist_w(\mathbf{0}, ne_{45}))}{n}=1.
    \]
\end{proof}

Let $B_{\mono}(t) = \{u\in \Z^2 \mid \monodist_w(\mathbf{0},u)\leq t\}$
and $\mathcal{B}_{\mono}(\mathscr{D})$ be the limiting shape 
of $B_{\mono}(t)/t$ in $\Grid[w]\sim\Grid[\mathscr{D}]$ as $t\to \infty$.
Thus, $\mathcal{B}_{\mono}(\mathscr{D}_{\epsilon^*})$ coincides with the 
$L_2$-ball in the eight intercardinal directions N, E, S, W, NE, SE, SW, NW.
If $\mathcal{B}_{\mono}(\mathscr{D}_{\epsilon^*})$ were convex, 
then it would have to be quite close to the $L_2$-ball. 

Unfortunately, our experiments show that $\mathcal{B}_{\mono}(\mathscr{D}_{\epsilon^*})$ is \emph{not} convex,
which casts serious doubt on \cref{q:monotone-squishy-grid} having an affirmative answer.
See \cref{fig:monotone-stretch-results}.

\begin{figure}
    \centering
    \begin{tabular}{cc}
    \includegraphics[width=0.45\linewidth]{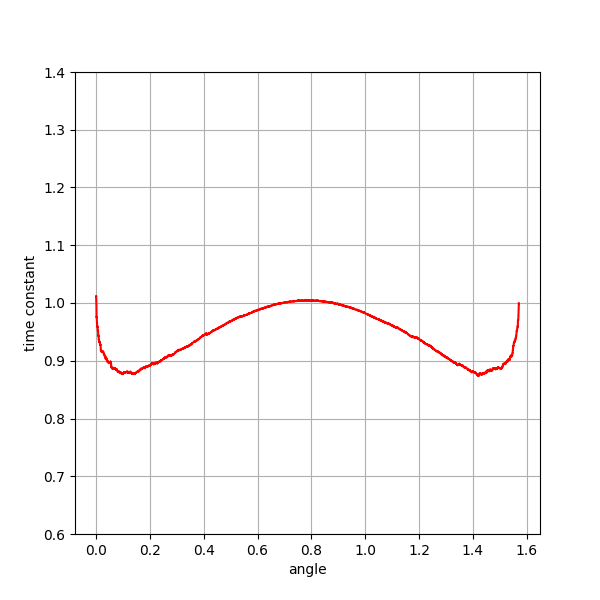}
&
    \includegraphics[width=0.45\linewidth]{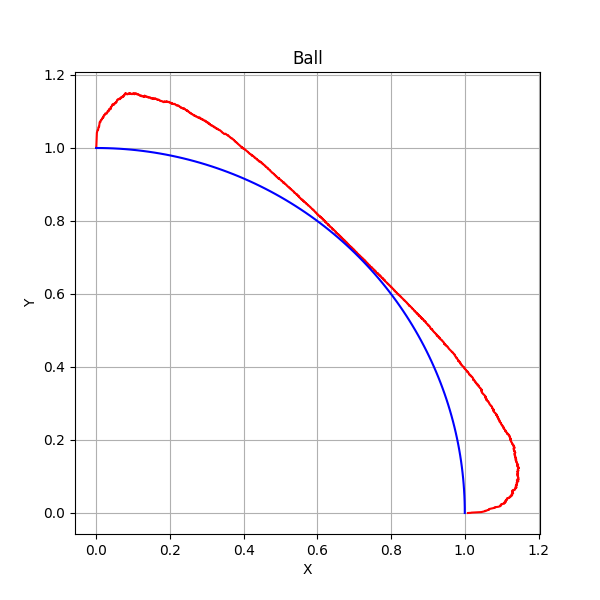}\\
    \textbf{(A)} & \textbf{(B)}
    \end{tabular}
    \caption{\textbf{(A)} 
    The stretch of $\monodist_w(\mathbf{0},ne_\theta)/\|ne_\theta\|_2$, as a function of the angle 
    $\theta \in [0,\frac{\pi}{2}]$, 
    expressed in radians.
    \textbf{(B)} The shape of $\mathcal{B}_{\mono}(\mathscr{D}_{\epsilon^*})$.}
    \label{fig:monotone-stretch-results}
\end{figure}

In retrospect, \cref{q:monotone-squishy-grid} is less likely than \cref{q:randomized-squishy-grid}
to be answered in the affirmative
since $\monodist_w(\mathbf{0},(n,m))$, $m\leq n$,  
is much more sensitive to small deviations in $m$ than $\dist_w(\mathbf{0},(n,m))$
Considering the cases when $m=0, m=0.1n$, and $m=n$, $\monodist_w(\mathbf{0},(n,m))$ is the minimum of
${n + 0\choose 0}=1, {n+0.1n\choose 0.1n}\approx 1.39^n$, and ${2n\choose n}\approx 4^n$
different paths, respectively.
This sharp jump from constant to exponential in the vicinity $m=0$ does not exist in \cref{q:randomized-squishy-grid}.
Assuming the variance of $\mathscr{D}$ is sufficiently large,
$\dist_w(\mathbf{0},ne_0)$ is the minimum of an exponential number 
of plausible shortest paths.

\subsection{Discrete Distributions for First Passage Percolation}

When dealing with discrete distributions the most natural measure of complexity is \emph{support size}.  Therefore, we study \cref{q:randomized-squishy-grid} experimentally by considering the space of 2- and 3-point distributions.
For a fixed integer $k$,
the $k$-point distribution
$\mathscr{D}(\lbrace (p_i,x_i)\rbrace_{i=1}^{k})$, 
is such that
\[
\Pr_{w_0\sim \mathscr{D}}(w_0 = x_i) = p_i.
\]
It is determined by $2k-1$ parameters, 
as $p_k=1-(p_1+\cdots+p_{k-1})$.

\subsubsection{Experimental Methodology}

To identify locally optimal distributions in the space of
$k$-point discrete distribution $\mathscr{D}(\{(p_i, x_i)\}_{i=1}^k)$, 
we employ a two-layer iterative strategy:

\begin{itemize}
    \item We first perturb the probability vector $(p_1, \ldots, p_k)$.
    \item For fixed probabilities $(p_1, \ldots, p_k)$, we generate $k$ random initial values $x_1, x_2, \ldots, x_k$ and then alternate between the following two update steps:
    \begin{description}
        \item[Perturbation Step.] We perturb each value $x_i$ and compute the estimated directional stretch for both $\theta = 0$ and $\theta = \frac{\pi}{4}$. 
        Let $\mu_\theta$ be the empirical 
        ratio $\frac{\dist_w(\mathbf{0}, ne_\theta)}{n}$, obtained from this round of simulation.  Here $n\approx $ 30,000.
        \item[Normalization Step.] We normalize the values $\{x_i\}_{i=1}^k$ by setting 
        \[
        x_i \leftarrow \frac{x_i}{\sqrt{\mu_0 \cdot \mu_{\frac{\pi}{4}}}}.
        \]
        This scaling ensures that the average stretch along the cardinal and intercardinal directions remains close to 1, thereby facilitating comparisons between distributions.
    \end{description}
\end{itemize}

\subsubsection{2-Point Distributions}

The best $2$-point distribution identified with this method is 
$\mathscr{D}_2$, given below.  Roughly speaking, every edge weight is either $0.41$ or $4.75$, $44$\% and $56$\% 
of the time, respectively.
\[
\mathscr{D}_2 = \{(0.44273, 0.41401), (0.55727, 4.75309)\}.
\]
We find that $\Grid[w]\sim \Grid[\mathscr{D}_2]$ empirically approximates Euclidean distances 
up to stretch $1.00750$, i.e., up to $3/4\%$ error.
\cref{fig:2-point-distribution-results}(A) plots the observed stretch $\dist_w(\mathbf{0},ne_\theta) / \|ne_\theta\|_2$ as a function of the angle $\theta \in [0,\pi/2)$. 
Figure ~\cref{fig:2-point-distribution-results}(B) shows the set of all grid points whose empirical graph distance from the origin first exceeds $n$. 
The resulting boundary is visually 
close to a Euclidean circle, 
suggesting that $\mathscr{D}_2$ induces 
an approximately isotropic metric in expectation.

\begin{figure}
    \centering
    \begin{tabular}{cc}
    \includegraphics[width=0.45\linewidth]{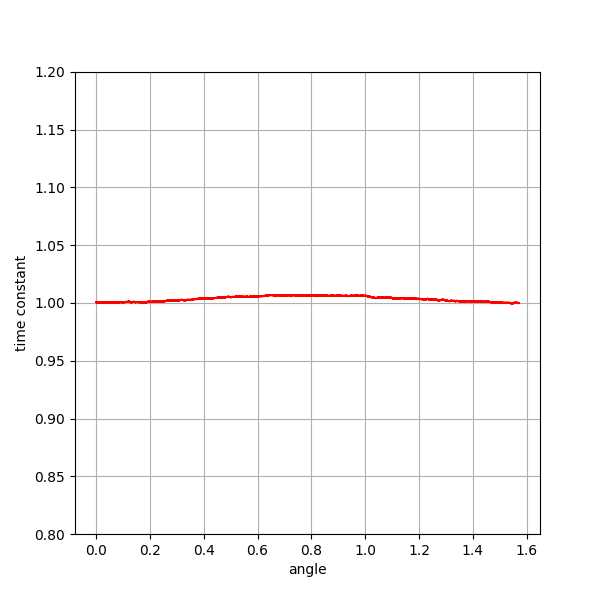}
&
    \includegraphics[width=0.45\linewidth]{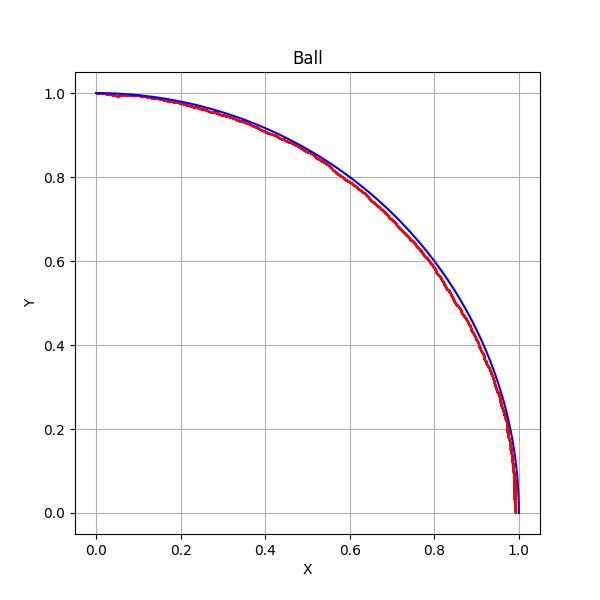}\\
    \textbf{(A)} & \textbf{(B)}
    \end{tabular}
    \caption{Results on the 2-point distribution $\mathscr{D}_2$.\textbf{(A)} Stretch $\dist_w(\mathbf{0},ne_\theta) / \|ne_\theta\|_2$, as a function of the angle $\theta \in [0,\frac{\pi}{2}]$.
    \textbf{(B)} The empirical distance-$n$ ball in $\Grid[w]\sim\Grid[\mathscr{D}_2]$.}
    \label{fig:2-point-distribution-results}
\end{figure}

Note that $\E(w_0\sim \mathscr{D}_2) \approx 2.83$, meaning that $\dist_w(\mathbf{0},ne_0)\approx n$
is likely to be realized by a highly non-monotone path, consistent with the observations in \cref{fig:monotone-stretch-results}.  Two sample paths from the origin to $(1000, 0)$ and $(1000, 100)$ are shown in \cref{fig:2-point-distribution-path}.

\begin{figure}
    \centering
    \begin{tabular}{cc}
    \includegraphics[width=0.45\linewidth]{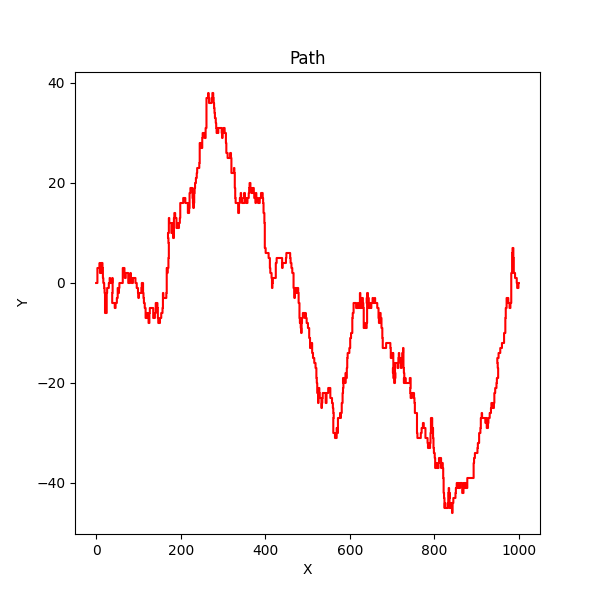}
&
    \includegraphics[width=0.45\linewidth]{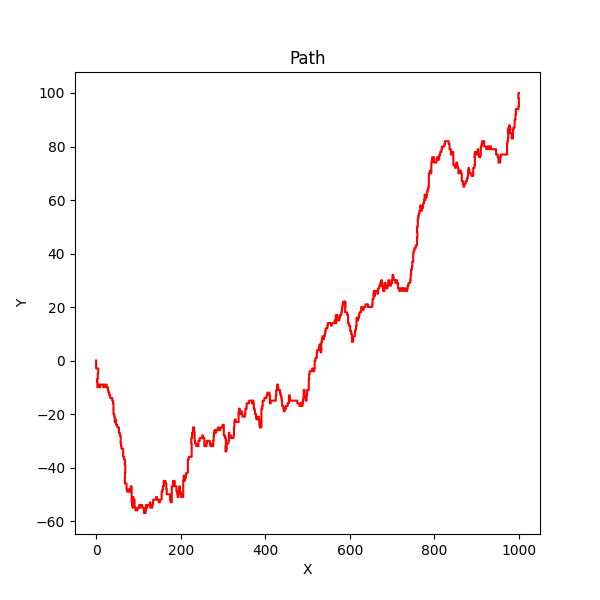}\\
    \textbf{(A)} & \textbf{(B)}
    \end{tabular}
    \caption{Results on the 2-point distribution $\mathscr{D}_2$.
    \textbf{(A)} The trace of a shortest path from $(0,0)$ to $(1000,0)$.
    \textbf{(B)} The trace of a shortest path from $(0,0)$ to $(1000,100)$.}
    \label{fig:2-point-distribution-path}
\end{figure}

\subsubsection{3-Point Distributions}

$\mathscr{D}_2$ does not leave much room for improvement, 
but we are able to eke out a slightly better empirical stretch of 1.00622 with a 3-point distribution $\mathscr{D}_3$.
\begin{align*}
    \mathscr{D}_3 &=\mathscr{D}(\{(0.34809,0.20647),(0.25735,2.51586),(0.39456,9.32215)\}).
\end{align*}
See \cref{fig:3-point-distribution-results} 
for visual representations of the empirical stretch of $\mathscr{D}_3$. 

\begin{figure}
    \centering
    \begin{tabular}{cc}
    \includegraphics[width=0.4\linewidth]{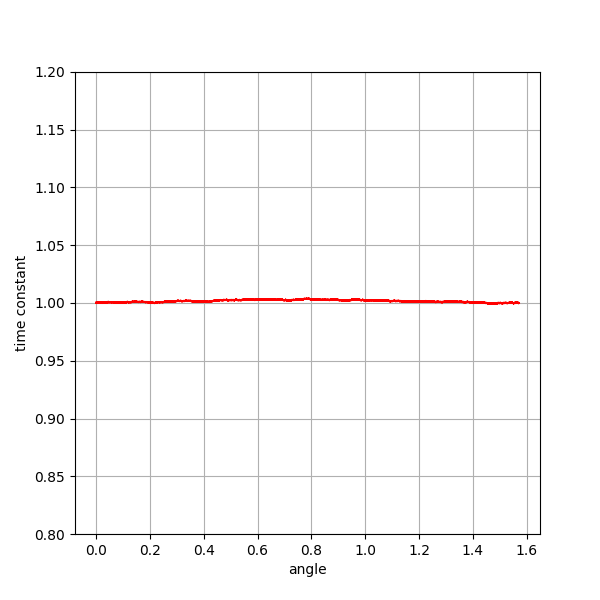}
&
    \includegraphics[width=0.4\linewidth]{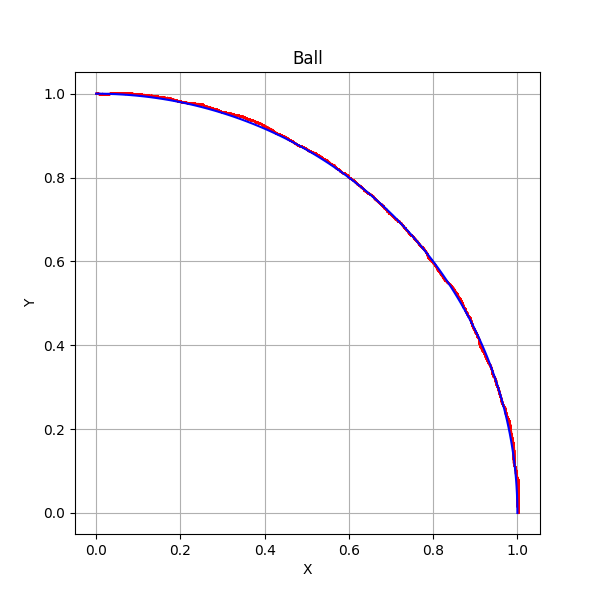}\\
    \textbf{(A)} & \textbf{(B)}\\

    \end{tabular}
    \caption{Results on the 3-point distribution $\mathscr{D}_3$. 
    \textbf{(A)} Stretch $\frac{\dist_w(\mathbf{0},ne_\theta)}{\|ne_\theta\|_2}$, as a function of $\theta \in [0,\frac{\pi}{2}]$.
    \textbf{(B)} The empirical distance-$n$ ball in $\Grid[w]\sim\Grid[\mathscr{D}_3]$.}
    \label{fig:3-point-distribution-results}
\end{figure}

\subsection{$L_p$-Balls and Continuous Distributions}

Alm and Deijfen~\cite{AlmD15} experimented with many of the standard
continuous distributions, such as uniform, exponential, Gamma, and Fisher.  Only a distribution from the Fisher class approximated Euclidean distances to within 1\%.  In this section we 
replicate some of Alm and Deijfen's findings, but instead of measuring error with respect to the $L_2$-norm, we show they are very good approximations for other $L_p$-norms, $p<2$.
\cref{fig:continuous-distribution-Lp} shows that (suitable scaled versions of) $\mathrm{Uniform}(0,1)$, $\Gamma(2,2)$ and $\Gamma(10,10)$ are good approximations to the 
$L_{1.87},L_{1.85},$ and $L_{1.32}$ metrics, respectively. 
It is not true that every 
$\mathcal{B}(\mathscr{D})$ approximates an $L_p$-ball.
For some non-constant distributions, 
the limit shape $\mathcal{B}(\mathscr{D})$ has flat edges; see~\cite[\S 2.5]{AuffingerDH17}.

\begin{figure}
\centering
\begin{tabular}{ccc}
\includegraphics[width=0.3\linewidth]{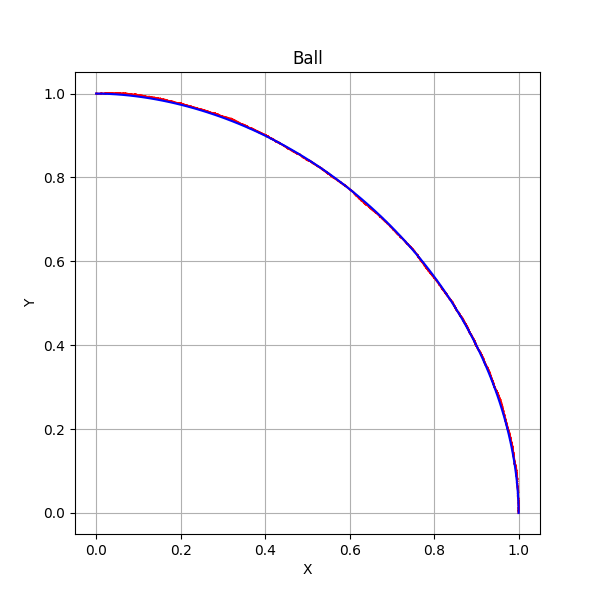} &
\includegraphics[width=0.3\linewidth]{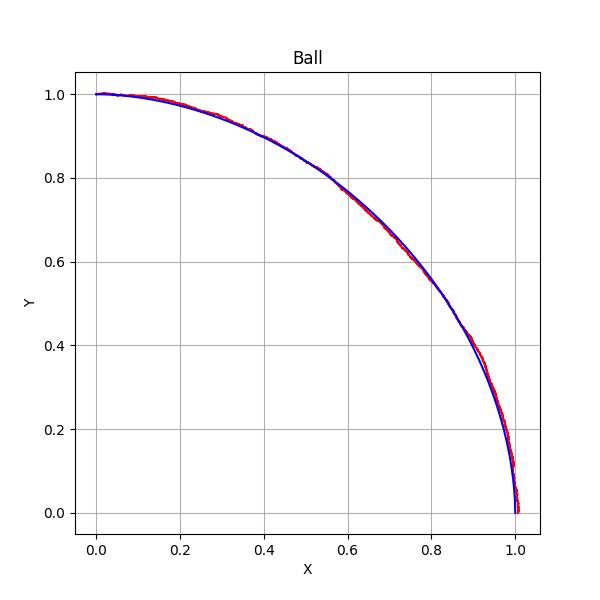} &
\includegraphics[width=0.3\linewidth]{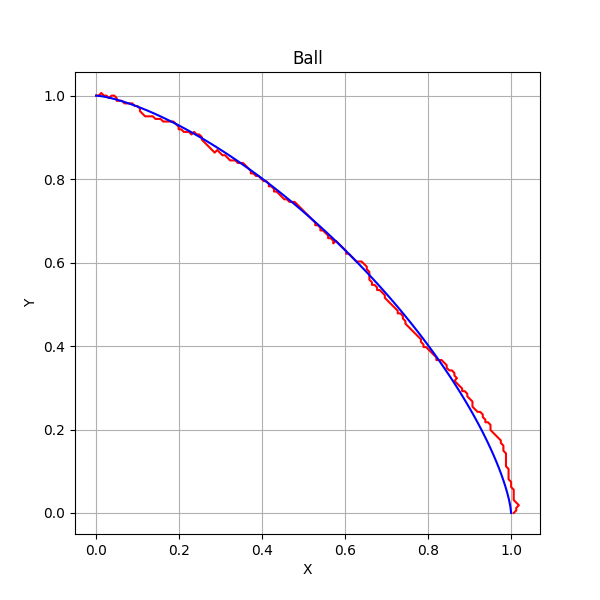} \\
\textbf{(A)} & \textbf{(B)} & \textbf{(C)}\\
\end{tabular}
\caption{Other $L_p$ balls, $p\in (1,2)$, and rescaled versions of some continuous distributions. (A) $L_{1.87361}$ ball and $\mathrm{Uniform}(0,1)$ (B) $L_{1.85691}$ ball and $\Gamma(2,2)$ (C) $L_{1.32879}$ ball and $\Gamma(10,10)$\label{fig:continuous-distribution-Lp}}
\end{figure}

We measure the empirical error of the graph distances in $\Grid[w]\sim \Grid[\mathscr{D}]$ 
with respect to an $L_p$-norm as follows.
\[
\Err(p;w) = \max_{\boldu\in \Ball_{n}} \left\lvert \frac{ \lVert \boldu\rVert_2}{\lVert \boldu_0\rVert_2}-\left(|\cos(\theta(\boldu))|^p+|\sin(\theta(\boldu))|^p\right)^{-\frac{1}{p}}\right\rvert.
\]
Here $n$ is a large constant,
$\Ball_n$ is the set of all $\boldu$
such that $\dist_w(\mathbf{0},\boldu)\geq n$ but the immediate predecessor of $\boldu$ is at distance less than $n$,
and $\boldu_0$ is the first vertex of $\Ball_n$ on the $x$-axis.  The 
angle of $\boldu$ is $\theta(\boldu)$.
Intuitively, if the time constant of $\mathscr{D}$ is not 1, we can effectively make it 1 for these calculations by
normalizing by $\|\boldu_0\|$.
Then $\Err(p;w)$ measures the maximum 
difference between the
normalized distance to a $\boldu\in\Ball_n$ and 
its idealized radial profile in the 
$L_p$ ball, evaluated in direction $\theta(\boldu)$.

Our experimental results show that for some continuous distributions $\mathscr{D}$, $\Grid[w]\sim\Grid[\mathscr{D}]$
closely approximates an $L_p$-metric 
for $p<2$.

\begin{center}
\begin{tabular}{|l|l|l|}
\hline
\textbf{Distribution $\mathscr{D}$} & \textbf{$p$ value} & \textbf{$\Err(p;w)$} \\
\hline
$\mathrm{Uniform}(0,1)$ & 1.87361 & 0.00462 \\
\hline
$\Gamma(2,2)$ & 1.85691 & 0.00986 \\
\hline
$\Gamma(10,10)$ & 1.32879 & 0.03658 \\
\hline
\end{tabular}
\end{center}

\section{Conclusion}\label{sect:conclusion}

In this paper we asked how well the integer grid graph $\Grid$ can approximate Euclidean distances, if weighted appropriately.  We gave two deterministic weighting schemes that answer \cref{q:squishy-grid}, the best one achieving a \textbf{\emph{polynomial}} additive stretch, that is, $\E[\dist_w(u,v)] = \|u-v\|_2 + O(\|u-v\|_2^{1-\delta})$ for $\delta=1/9$. 
Improving the additive stretch to something
\textbf{\emph{subpolynomial}} $\|u-v\|_2^{o(1)}$
seems to require a new approach to the problem.  Our ``highway'' method seems incapable of achieving subpolynomial additive error, and even if \cref{q:randomized-squishy-grid} is answered affirmatively (choosing weights i.i.d.~from some distribution $\mathscr{D}^*$), the tail bounds here only give 
polynomial additive error~\cite{AuffingerDH17}.

\medskip 

We conjecture that no weighting achieves subpolynomial additive error.

\begin{conjecture}\label{conj:no-constant-error}
The additive error $\dist_w(u,v)-\|u-v\|_2$ of $w : E(\Grid)\to \mathbb{R}_{\geq 0}$ 
is a function of $d = \|u-v\|_2$.  
\begin{description}
    \item[Weak Conjecture.] There is no $w$ with constant additive error $O(1)$, independent of $d$.
    \item[Strong Conjecture.] There is no $w$ with  subpolynomial additive error $d^{o(1)}$.
\end{description}
\end{conjecture}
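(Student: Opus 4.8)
\textbf{A plan of attack (for the Weak Conjecture; the Strong Conjecture follows the same outline with quantitatively weaker parameters).}
I would argue by contradiction. Suppose $w:E(\Grid)\to\R_{\ge 0}$ satisfies $\bigl|\dist_w(u,v)-\|u-v\|_2\bigr|\le C$ for all $u,v\in V(\Grid)$ and a fixed constant $C$; the goal is a contradiction once $\|u-v\|_2$ is large enough. The first ingredient is that geodesics concentrate in thin tubes. For a $w$-geodesic $\gamma_{uv}$ and any vertex $p$ on it, additivity along the geodesic gives $\dist_w(u,p)+\dist_w(p,v)=\dist_w(u,v)\le\|u-v\|_2+C$, and since $\dist_w(u,p)\ge\|u-p\|_2-C$ and $\dist_w(p,v)\ge\|p-v\|_2-C$ we get $\|u-p\|_2+\|p-v\|_2\le\|u-v\|_2+3C$; hence $p$ lies in the ellipse with foci $u,v$ and focal sum $\|u-v\|_2+3C$, whose minor semi-axis is $O\!\bigl(\sqrt{C\,\|u-v\|_2}\bigr)$. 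In particular, if $v,v'$ lie on the circle of radius $R$ about the origin at angular distance $\epsilon$, then $\gamma_{\mathbf{0}v}$ and $\gamma_{\mathbf{0}v'}$ are vertex-disjoint at Euclidean distance $\gg C/\epsilon^2$ from $\mathbf{0}$, so their lowest common ancestor in a fixed shortest-path tree $T$ rooted at $\mathbf{0}$ lies in $\Ball(\mathbf{0},O(C/\epsilon^2))$.

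The second ingredient is structural: $T$ is a spanning tree of the planar, maximum-degree-$4$ graph $\Grid$, and using the tube bound one checks that the $\Theta(R)$ vertices of $T$ on the circle of radius $R$ occur along $T$ in the same cyclic order as their angular coordinates. Thus $T$ restricted to those boundary vertices is a plane ternary tree whose branch points are constrained as above: a branch point that separates two angular arcs a gap $\sigma$ apart must sit within distance $O(C/\sigma^2)$ of the origin, while every vertex of $T$ reachable along a geodesic to a boundary point lies within distance $R+O(C)$ of the origin.

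The heart of the proof is to show that these two facts are incompatible, and this is where I expect the real difficulty. The line I would pursue is a renormalization/self-similarity argument: at \emph{every} grid vertex $z$ with $\|z\|_2\lesssim R$, the geodesics from $\mathbf{0}$ to boundary points that pass through $z$ form a secondary ``fan'', and the tube estimate applied with $z$ in place of $\mathbf{0}$ forces this secondary fan to split quickly; summing the forced splittings over a dyadic range of scales and of basepoints $z$ should overrun the number of tree edges available inside $\Ball(\mathbf{0},R+O(C))$. The obstacle --- and, presumably, the reason the conjecture is still open --- is that a naive count cannot work: $\Ball(\mathbf{0},R)$ contains $\Theta(R^2)$ grid vertices, which is far more than enough combinatorial room to host any ternary branching pattern on $\Theta(R)$ boundary leaves, so the contradiction cannot be extracted from counting branch points alone. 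What seems to be needed is a genuinely two-dimensional (area- or entropy-type) invariant quantifying how much angular information the metric $\dist_w$ must carry from the boundary circle inward, together with a proof that $O(1)$-accuracy would force this flux past what a bounded-degree planar graph can support.

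For the Strong Conjecture the tube estimate gives tubes of width $\|u-v\|_2^{1/2+o(1)}$, still sublinear, so the entire framework applies verbatim with $d^{o(1)}$ replacing $O(1)$; the obstacle is identical and, if anything, the extra quantitative slack makes it harder. Two pieces of heuristic evidence point the same way: if $w$ is sampled i.i.d.\ from any nondegenerate distribution, the fluctuations of $\dist_w$ are polynomially large~\cite{AuffingerDH17}, so the randomized route cannot even reach subpolynomial additive error; and the deterministic highway construction of \cref{sect:highway-construction} inherently pays $\Omega\!\bigl(d^{\Omega(1)}\bigr)$ to route around the gaps created by conflicting highways, with no evident way for a deterministic device to do better.
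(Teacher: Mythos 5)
The statement you are addressing is labeled as a \emph{conjecture} in the paper: it appears in the Conclusion as an explicitly open problem, and the paper offers no proof of either version. The only support the authors give is heuristic — that i.i.d.\ weightings have polynomially large fluctuations (citing the first-passage-percolation literature), and that their own highway construction seems intrinsically stuck at polynomial additive error. So there is no ``paper proof'' to compare against, and a complete proof, if you had one, would be new.

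Your write-up is candid about this, which is the right call. The pieces you do supply are correct: under a hypothetical $O(C)$-additive bound, the triangle/ellipse argument along a geodesic gives $\|u-p\|_2 + \|p-v\|_2 \le \|u-v\|_2 + 3C$, hence confinement to a tube of width $O(\sqrt{Cd})$; the cyclic-ordering observation for boundary leaves of a shortest-path tree in a planar graph is sound; and you correctly identify why a naive branch-point count cannot close the argument (a ball of radius $R$ has $\Theta(R^2)$ vertices, far more than a ternary tree on $\Theta(R)$ leaves needs). The gap you name — the absence of a two-dimensional, entropy- or flux-type invariant that $O(1)$-accuracy would overload but that a bounded-degree planar graph cannot carry — is exactly the missing ingredient, and you should not claim more than you have. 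In short: this is a reasonable and honestly-labeled plan of attack on an open conjecture, not a proof, and the paper itself does not pretend otherwise.
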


\medskip 

In the randomized setting, \cref{q:randomized-squishy-grid} is equivalent to an old problem in \emph{first passage percolation}~\cite{AuffingerDH17,HammersleyW65} that is 
unsolved, but has some empirical evidence in its favor.  
Alm and Deijfen~\cite{AlmD15} showed that when $\mathscr{D}$ is a certain Fisher distribution, that $\Grid[w]\sim\Grid[\mathscr{D}]$ approximates Euclidean distances to less than $1\%$ error. 
In this paper, we demonstrated that a simple 2-point distribution $\mathscr{D}_2$ achieves $0.75\%$ error, and that a 3-point distribution $\mathscr{D}_3$ achieves 
$0.622\%$ error.  
As a practical matter, choosing weights according to 
$\mathscr{D}_2$ or $\mathscr{D}_3$ will induce smaller distance errors than our deterministic schemes, 
for all but extraordinarily large distances.

It is an interesting open problem to \underline{\emph{prove}} that in $\Grid[w]\sim \Grid[\mathscr{D}_2]$, the expected error of 
$\dist_w$ is at most 1\%, that is:
\[
\E(\dist_w(u,v)) = (1.005 \pm 0.005 \pm o(1))\|u-v\|_2.
\]

\paragraph{Acknowledgments.} We would like to thank Greg Bodwin for posing
the problem to us, G\'{a}bor Tardos and Boris Bukh 
for pointing us to references~\cite{PachPS90,RadinS96}
and~\cite{BuragoI15}, respectively,
and Wesley Pegden for 
suggesting we look into the percolation literature.


\appendix

\section{Proofs}

\subsection{Proof of \cref{lem:highway-approx}}

\begin{proof}[Proof of \cref{lem:highway-approx}]
Without loss of generality, assume that $u = \mathbf{0}$ and $v = (x, y)$ for $x,y > 0$, and that the slope of the line $\ell$ is $a \in [0, 1]$, as other cases follow by symmetry. It follows from the definition of $\Highway(\ell)$ that the vertical deviations $|\ell(0)-0|$ and $|\ell(x)-x|$ are each bounded by $(1+|a|)/2$, and therefore we have $|ax-y|\leq 1+a$.

\begin{align*}
    \left| \dist_{w_\ell}(u, v) - \|u - v\|_2 \right| &= \left| (x+y)\frac{\sqrt{a^2+1}}{a+1} - \sqrt{x^2 + y^2}\right|\\
    &= \left| \frac{(x+y)\sqrt{a^2+1} - \sqrt{x^2 + y^2}(a+1)}{a+1}\right|\\
    &= \left| \frac{\left((x+y)\sqrt{a^2+1} \right)^2 - \left(\sqrt{x^2+y^2}(a+1)\right)^2}{(a+1)[(x+y)\sqrt{a^2+1}+(a+1)\sqrt{x^2+y^2}]} \right|\\
    &= \left| \frac{(x+y)^2(a^2+1)-(a+1)^2(x^2+y^2)}{(a+1)\left[(x+y)\sqrt{a^2+1}+(a+1)\sqrt{x^2+y^2}\right]} \right|\\
    &= \left| \frac{2(ax-y)(ay-x)}{(a+1)\left[(x+y)\sqrt{a^2+1}+(a+1)\sqrt{x^2+y^2}\right]} \right|\\
    &\le \frac{2(1+a)x}{(1+a)[x+x]}\\
    &= 1.
\end{align*}

\end{proof}

\end{document}